\renewcommand{\thefootnote}{\alph{footnote}}
\newcommand{\symfootnote}[1]{%
\let\oldthefootnote=\thefootnote%
\stepcounter{mpfootnote}%
\addtocounter{footnote}{-1}%
\renewcommand{\thefootnote}{\fnsymbol{mpfootnote}}%
\footnote{#1}%
\let\thefootnote=\oldthefootnote%
}
\title{Low-Variance and Zero-Variance Baselines for Extensive-Form Games}
\author{Trevor Davis,\textsuperscript{1\textdagger}
Martin Schmid,\textsuperscript{2}
Michael Bowling,\textsuperscript{2,1}\\
\textsuperscript{1}Department of Computing Science, University of Alberta\\
\textsuperscript{2}DeepMind\\
trdavis1@ualberta.ca,
\{mschmid, bowlingm\}@google.com\\
\textsuperscript{\textdagger} Work done during an internship at DeepMind.}
\newcommand{\Reals}{\mathbb{R}}
\newcommand{\Prob}[1]{\Pr\left[#1\right]}
\newcommand{\Exp}[2][]{\mathbb{E}_{#1}\left[#2\right]}
\newcommand{\Var}[2][]{\mathrm{Var}_{#1}\left[#2\right]}
\newcommand{\indic}{\mathbbm{1}}
\newcommand{\BigO}[1]{\mathcal{O}(#1)}
\newcommand{\Players}{N}
\newcommand{\player}{i}
\newcommand{\my}[1]{#1_i}
\newcommand{\opp}[1]{#1_{-i}}
\newcommand{\chance}{c}
\newcommand{\chances}[1]{#1_c}
\newcommand{\Histories}{H}
\newcommand{\hist}{h}
\newcommand{\Actions}{A}
\newcommand{\act}{a}
\newcommand{\Terminals}{Z}
\newcommand{\term}{z}
\newcommand{\Toact}{P}
\newcommand{\utility}{u}
\newcommand{\exputil}[2]{\utility(#1|#2)}
\newcommand{\Isets}{\mathcal{I}}
\newcommand{\Iset}{I}
\newcommand{\Strategies}{\Sigma}
\newcommand{\strategy}{\sigma}
\newcommand{\avgstrat}{\overline{\strategy}}
\newcommand{\reach}[1]{\pi^{#1}}
\renewcommand{\time}{t}
\newcommand{\Time}{T}
\newcommand{\Timesteps}{\mathcal{T}}
\newcommand{\regret}[1]{r^{#1}}
\newcommand{\sampreg}[3]{\hat{r}^{#1}(#2|#3)}
\newcommand{\Regret}[1]{R^{#1}}
\newcommand{\sampling}[1]{q^{#1}}
\newcommand{\samputil}[3]{\hat{\utility}(#1|#2,#3)}
\newcommand{\baseline}{b}
\newcommand{\baseutil}[3]{\hat{\utility}_{\baseline}(#1|#2,#3)}
\newcommand{\weight}[1]{w_{#1}}
\newcommand{\Pubstate}{S}
\newcommand{\Pubstates}{\mathcal{S}}
\newcommand{\Transitions}[1]{\mathcal{T}(#1)}
\newcommand{\timefunc}[1]{\text{time}(#1)}
\newcommand{\partutil}[2]{\tilde{\utility}^{#1}(#2)}
\newtheorem{thm}{Theorem}
\newtheorem{lem}{Lemma}
\algnewcommand{\Iif}[2]{\State \algorithmicif\ #1\ \algorithmicthen\ #2}
\begin{document}
\maketitle

\begin{abstract}
Extensive-form games (EFGs) are a common model of multi-agent interactions with imperfect information. State-of-the-art algorithms for solving these games typically perform full walks of the game tree that can prove prohibitively slow in large games. Alternatively, sampling-based methods such as Monte Carlo Counterfactual Regret Minimization walk one or more trajectories through the tree, touching only a fraction of the nodes on each iteration, at the expense of requiring more iterations to converge due to the variance of sampled values. In this paper, we extend recent work that uses baseline estimates to reduce this variance. We introduce a framework of baseline-corrected values in EFGs that generalizes the previous work. Within our framework, we propose new baseline functions that result in significantly reduced variance compared to existing techniques. We show that one particular choice of such a function --- predictive baseline --- is provably optimal under certain sampling schemes. This allows for efficient computation of zero-variance value estimates even along sampled trajectories.
\end{abstract}

\section{Introduction}

Multi-agent strategic interactions are often modeled as \emph{extensive-form games (EFGs)}, a game tree representation that allows for hidden information, stochastic outcomes, and sequential interactions. Research on solving EFGs has been driven by the experimental domain of poker games, in which the \emph{Counterfactual Regret Minimization (CFR)} algorithm \citep{Zinkevich07} has been the basis of several breakthroughs. Approaches incorporating CFR have been used to essentially solve one nontrivial poker game \citep{Bowling15}, and to beat human professionals in another \citep{Moravcik17, Brown18}.

CFR is in essence a policy improvement algorithm that iteratively evaluates and improves a strategy for playing an EFG. As part of this process, it must walk the entire game tree on every iteration. However, many games have prohibitively large trees when represented as EFGs. For example, many commonly played poker games have more possible game states than there are atoms in the universe \citep{Johanson13}. In such cases, performing even a single iteration of traditional CFR is impossible.

The prohibitive cost of CFR iterations is the motivation for \emph{Monte Carlo Counterfactual Regret Minimization (MCCFR)}, which samples trajectories to walk through the tree to allow for significantly faster iterations  \citep{Lanctot09}. Additionally, while CFR spends equal time updating every game state, the sampling scheme of MCCFR can be altered to target updates to parts of the game that are more critical or more difficult to learn \citep{Gibson12, Gibson12b}. As a trade-off for these benefits, MCCFR requires more iterations to converge due to the variance of sampled values.

In the Reinforcement Learning (RL) community, the topic of variance reduction in sampling algorithms has been extensively studied. In particular, baseline functions that estimate state values are typically used within policy gradient methods to decrease the variance of value estimates along sampled trajectories \citep{Williams92, Greensmith04, Bhatnagar09, Schulman16}. Recent work by \citet{Schmid19} has adapted these ideas to variance reduction in MCCFR, resulting in the VR-MCCFR algorithm.

In this work, we generalize and extend the ideas of Schmid et al. We introduce a framework for variance reduction of sampled values in EFGs by use of state-action baseline functions. We show that the VR-MCCFR is a specific application of our baseline framework that unnecessarily generalizes across dissimilar states. We introduce alternative baseline functions that take advantage of our access to the full hidden state during training, avoiding this generalization. Empirically, our new baselines result in significantly reduced variance and faster convergence than VR-MCCFR.

Schmid et al. also discuss the idea of an oracle baseline that provably minimizes variance, but is impractical to compute. We introduce a \emph{predictive baseline} that estimates this oracle value and can be efficiently computed. We show that under certain sampling schemes, the predictive baseline exactly tracks the true oracle value, thus provably computing zero-variance sampled values. For the first time, this allows for exact CFR updates to be performed along sampled trajectories.

\section{Background}\label{sec:bg}

An \emph{extensive-form game (EFG)} \citep{Osborne94} is a game tree, formally defined by a tuple $\langle \Players, \Histories, \Toact, \chances{\strategy}, \utility, \Isets \rangle$. $\Players$ is a finite set of players. $\Histories$ is a set of \emph{histories}, where each history is a sequence of \emph{actions} and corresponds to a vertex of the tree. For $\hist,\hist' \in \Histories$, we write $\hist \sqsubseteq \hist'$ if $\hist$ is a prefix of $\hist'$. The set of actions available at $\hist \in \Histories$ that lead to a successor history $(\hist\act) \in \Histories$ is denoted $\Actions(\hist)$. Histories with no successors are \emph{terminal histories} $\Terminals \subseteq \Histories$. $\Toact \colon \Histories \setminus \Terminals \to \Players \cup \{\chance\}$ maps each history to the player that chooses the next action, where $\chance$ is the \emph{chance} player that acts according to the defined distribution $\chances{\strategy}(\hist) \in \Delta_{\Actions(\hist)}$, where $\Delta_{\Actions(\hist)}$ is the set of probability distributions over $\Actions(\hist)$. The \emph{utility function} $\utility \colon \Players \times \Terminals \to \Reals$ assigns a value to each terminal history for each player.

For each player $\player \in \Players$, the collection of \emph{(augmented) information sets} $\my{\Isets} \in \Isets$ is a partition of the histories $\Histories$.\footnote{Augmented information sets were introduced by \citet{Burch14}.} Player $\player$ does not observe the true history $\hist$, but only the information set $\my{\Iset}(\hist)$. Necessarily, this means that $\Actions(\hist) = \Actions(\hist')$ if $\Iset_{\Toact(\hist)}(\hist) = \Iset_{\Toact(\hist)}(\hist')$, which we then denote $\Actions(\Iset)$.

Each player selects actions according to a \emph{(behavioral) strategy} that maps each information set $\Iset \in \my{\Isets}$ where $\Toact(\Iset) = \player$ to a distribution over actions, $\my{\strategy}(\Iset) \in \Delta_{\Actions(\Iset)}$. The probability of taking a specific action at a history is $\strategy_{\Toact(\hist)}(\hist, \act) = \strategy_{\Toact(\hist)}(\Iset(\hist), \act)$. A \emph{strategy profile}, $\strategy = \{\my{\strategy} | \player \in \Players\}$, specifies a strategy for each player. The \emph{reach probability} of a history $\hist$ is $\reach{\strategy}(\hist) = \prod_{(\hist'\act) \sqsubseteq \hist} \strategy_{\Toact(\hist')}(\hist', \act)$. This product can be decomposed as $\reach{\strategy}(\hist) = \my{\reach{\my{\strategy}}}(\hist) \opp{\reach{\opp{\strategy}}}(\hist)$, where the first term contains the actions of player $i$, and the second contains the actions of other players and chance. We also write $\reach{\strategy}(\hist, \hist')$ for the probability of reaching $\hist'$ from $\hist$, defined to be 0 if $\hist \not\sqsubseteq \hist'$. A strategy profile defines an expected utility for each player as $\my{\utility}(\strategy) = \my{\utility}(\my{\strategy},\opp{\strategy}) = \sum_{\term \in \Terminals} \reach{\strategy}(\term) \my{\utility}(\term)$.

In this work, we consider two-player zero-sum EFGs, in which $\Players = \{1,2\}$ and $\utility(\term) \coloneqq \my{\utility}(\term) = {-\opp{\utility}(\term)}$. We also assume that the information sets satisfy \emph{perfect recall}, which requires that players not forget any information that they once observed. Mathematically, this means that two histories in the same information set $\Iset_\player$ must have the same sequence of past information sets and actions for player $\player$. All games played by humans exhibit perfect recall, and solving games without perfect recall is NP-hard. We write $\my{\Iset} \sqsubseteq \hist$ if there is any history $\hist' \in \my{\Iset}$ such that $\hist' \sqsubseteq \hist$, and we denote that history (unique by perfect recall) by $\my{\Iset}[\hist]$.

\subsection{Solving EFGs}

A common solution concept for EFGs is a \emph{Nash equilibrium}, in which no player has incentive to deviate from their specified strategy. We evaluate strategy profiles by their distance from equilibrium, as measured by \emph{exploitability}, which is the average expected loss against a worst-case opponent: $\text{exploit}(\strategy) = \nicefrac{1}{2} \max_{\strategy' \in \Strategies} (\utility_2(\strategy_1,\strategy_2') + \utility_1(\strategy_1', \strategy_2))$.

\emph{Counterfactual Regret Minimization (CFR)} is an algorithm for learning Nash equilibria in EFGs through iterative self play \citep{Zinkevich07}. For any $\hist \in \Histories$, let $\Terminals[\hist] = \{\term \in \Terminals \mid \hist \sqsubseteq \term\}$ be the set of terminal histories reachable from $\hist$, and define the history's expected utility as $\exputil{\hist}{\strategy} = \sum_{\term \in \Terminals[\hist]} \reach{\strategy}(\hist, \term) \utility(\term)$. For each information set $\Iset$ and action $\act \in \Actions(\Iset)$, CFR accumulates the \emph{counterfactual regret} of not choosing that action on previous iterations:
\begin{equation}
\regret{\time}(\Iset, \act) = \sum_{\hist \in \Iset} \reach{\strategy^\time}_{-\Toact(\hist)}(\hist) \left(\exputil{(\hist\act)}{\strategy} - \exputil{\hist}{\strategy}\right)
\qquad
\Regret{\Time}(\Iset, \act) = \sum_{\time=1}^\Time \regret{\time}(\Iset, \act)
\label{eq:cfr}
\end{equation}
The next strategy profile is then selected with \emph{regret matching}, which sets probabilities proportional to the positive regrets: $\strategy^{\Time+1}(\Iset, \act) \propto \max(\Regret{\Time}(\Iset,\act), 0)$. Defining the average strategy $\avgstrat^\Time$ such that $\avgstrat^\Time(\hist, \act) \propto \sum_{\time=1}^\Time \my{\reach{\strategy^\time}}(\hist) \my{\strategy^\time}(\hist, \act)$, CFR guarantees that $\text{exploit}(\avgstrat^\Time) \to 0$ as $\Time \to \infty$, thus converging to a Nash equilibrium.

The state-of-the-art \emph{CFR+} variant of CFR greedily zeroes all negative regrets on every iteration, replacing $\Regret{\time}$ with an accumulant $Q^\time$ recursively defined with $Q^0(\Iset, \act) = 0, Q^\time(\Iset, \act) = \max(Q^{\time-1}(\Iset, \act) + \regret{\time}(\Iset, \act), 0)$ \citep{Tammelin15}. It also alternates updates for each player, and uses linear averaging, which gives greater weight to more recent strategies.

CFR(+) requires a full walk of the game tree on each iteration, which can be a very costly operation on large games. \emph{Monte Carlo Counterfactual Regret Minimization (MCCFR)} avoids this cost by only updating along sampled trajectories. For simplicity,  we focus on the \emph{outcome sampling (OS)} variant of MCCFR \citep{Lanctot09}, though all results in this paper can be trivially extended to other MCCFR variants. On each iteration $\time$, a sampling strategy $\sampling{\time} \in \Strategies$ is used to sample a single terminal history $\term^\time \sim \reach{\sampling{\time}}$. A sampled utility is then calculated recursively for each prefix of $\term^\time$ as
\begin{equation}
\samputil{\hist, \act}{\strategy^\time}{\term^\time} = \frac{\indic((\hist\act) \sqsubseteq \term^\time)}{\sampling{\time}(\hist, \act)} \samputil{(\hist\act)}{\strategy^\time}{\term^\time}
\qquad
\samputil{\hist}{\strategy^\time}{\term^\time} = \sum_{\act \in \Actions(\hist)} \strategy^\time(\hist, \act) \samputil{\hist, \act}{\strategy^\time}{\term^\time}
\end{equation}
where $\indic$ is the indicator function and $\samputil{\term^\time}{\strategy^\time}{\term^\time} = \utility(\term^\time)$. For any $\hist \sqsubseteq \term^\time$, the sampled value $\samputil{\hist, \act}{\strategy^\time}{\term^\time}$ is an unbiased estimate of the expected utility $\exputil{(\hist\act)}{\strategy^\time}$, whether $\act$ is sampled or not. These sampled values are used to calculate a sample of the counterfactual regret:
\begin{equation}
\sampreg{\time}{\Iset, \act}{\term^\time} = \sum_{\hist \in \Iset} \frac{\reach{\strategy^\time}_{-\Toact(\hist)}(\hist)}{\reach{\sampling{\time}}(\hist)} \left(\samputil{\hist, \act}{\strategy^\time}{\term^\time} - \samputil{\hist}{\strategy^\time}{\term^\time}\right)
\label{eq:sampledregret}
\end{equation}
This gives an unbiased sample of the counterfactual regret $\regret{\time}(\Iset, \act)$ for all $\Iset \in \Isets$, which is then used to perform unbiased CFR updates. As long as as the sampling strategies satisfy $\reach{\sampling{\time}}(\term) > 0$ for all $\term \in \Terminals$, MCCFR guarantees that $\text{exploit}(\avgstrat^\Time) \to 0$ with high probability as $\Time \to \infty$, thus converging to a Nash equilibrium. However, the rate of convergence depends on the variance of $\sampreg{\time}{\Iset, \act}{\term^\time}$ \citep{Gibson12}.

\section{Baseline framework for EFGs}
\label{sec:framework}

We now introduce a method for calculating unbiased estimates of utilities in EFGs that has lower variance than the sampled utilities $\samputil{\hist, \act}{\strategy^\time}{\term^\time}$ defined above.  We do this using \emph{baseline functions}, which estimate the expected utility of actions in the game. We will describe specific examples of such functions in Section~\ref{sec:baselines}; for now, we assume the existence of some function $\baseline^\time \colon \Histories \times \Actions \to \Reals$ such that $\baseline^\time(\hist, \act)$ in some way approximates $\exputil{(\hist\act)}{\strategy^\time}$. We define a baseline-corrected sampled utility as
\begin{align}
\baseutil{\hist, \act}{\strategy^\time}{\term^\time} &= \frac{\indic((\hist\act) \sqsubseteq \term^\time)}{\sampling{\time}(\hist, \act)} \left(\baseutil{(\hist\act)}{\strategy^\time}{\term^\time} - \baseline^\time(\hist, \act)\right) + \baseline^\time(\hist, \act)\label{eq:baseline}\\
\baseutil{\hist}{\strategy^\time}{\term^\time} &= \sum_{\act \in \Actions(\hist)} \strategy^\time(\hist, \act) \baseutil{\hist, \act}{\strategy^\time}{\term^\time}
\end{align}

Equation (\ref{eq:baseline}) comes from the application of a \emph{control variate}, in which we lower the variance of a random variable ($X = \frac{\indic((\hist\act) \sqsubseteq \term^\time)}{\sampling{\time}(\hist, \act)} \baseutil{(\hist\act)}{\strategy^\time}{\term^\time}$) by subtracting another random variable ($Y = \frac{\indic((\hist\act) \sqsubseteq \term^\time)}{\sampling{\time}(\hist, \act)} \baseline^\time(\hist, \act)$) and adding its known expectation ($\Exp{Y} = \baseline^\time(\hist, \act)$), thus keeping the resulting estimate unbiased. If $X$ and $Y$ are correlated, then this estimate will have lower variance than $X$ itself. Because $\baseutil{(\hist\act)}{\strategy^\time}{\term^\time}$ is defined recursively, its computation includes the application of independent control variates at every action taken between $\hist$ and $\term^\time$.

These estimates are unbiased and, if the baseline function is chosen well, have low variance:

\begin{thm}
For any $\hist \sqsubseteq \term^\time$ and any $\act \in \Actions(\hist)$, the baseline-corrected utilities satisfy
\begin{equation*}
\Exp[\term^\time]{\baseutil{\hist, \act}{\strategy^\time}{\term^\time}|\term^\time \sqsupseteq \hist} = \exputil{(\hist\act)}{\strategy^\time}
\qquad
\Exp[\term^\time]{\baseutil{\hist}{\strategy^\time}{\term^\time}|\term^\time \sqsupseteq \hist} = \exputil{\hist}{\strategy^\time}
\end{equation*}
\label{thm:unbiased}
\end{thm}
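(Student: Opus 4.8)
The plan is to prove both identities simultaneously by induction over the game tree, working backward from the terminal history $\term^\time$. The two statements are mutually recursive: the history value $\baseutil{\hist}{\strategy^\time}{\term^\time}$ is a strategy-weighted sum of the action values $\baseutil{\hist, \act}{\strategy^\time}{\term^\time}$ at the same history, while each action value is defined through the child history value $\baseutil{(\hist\act)}{\strategy^\time}{\term^\time}$. I would therefore induct on the number of actions remaining between $\hist$ and $\term^\time$. The base case is the terminal history itself: since $\baseutil{\term^\time}{\strategy^\time}{\term^\time} = \utility(\term^\time) = \exputil{\term^\time}{\strategy^\time}$ deterministically, both identities hold trivially.

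Before the inductive step I would isolate the single probabilistic fact on which the argument rests. In outcome sampling $\term^\time \sim \reach{\sampling{\time}}$ factorizes as a product over the actions on the path, so conditioned on $\term^\time \sqsupseteq \hist$ the next action equals $\act$ with probability exactly $\sampling{\time}(\hist, \act)$, and the event ``$\act$ is sampled next'' coincides with the event $\term^\time \sqsupseteq (\hist\act)$. Conditioned further on $\term^\time \sqsupseteq (\hist\act)$, the remaining trajectory is distributed as sampling continued from $(\hist\act)$, so the inductive hypothesis at the child applies verbatim under this conditional law.

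For the action-value identity I would take $\Exp[\term^\time]{\cdot \mid \term^\time \sqsupseteq \hist}$ of the defining equation (\ref{eq:baseline}). The additive term $\baseline^\time(\hist, \act)$ is constant and passes through. In the control-variate term, the indicator $\indic((\hist\act) \sqsubseteq \term^\time)$ is nonzero only on the event $\term^\time \sqsupseteq (\hist\act)$, whose conditional probability $\sampling{\time}(\hist, \act)$ cancels the factor $\nicefrac{1}{\sampling{\time}(\hist, \act)}$, leaving $\Exp[\term^\time]{\baseutil{(\hist\act)}{\strategy^\time}{\term^\time} \mid \term^\time \sqsupseteq (\hist\act)} - \baseline^\time(\hist, \act)$. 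The inductive hypothesis turns the first term into $\exputil{(\hist\act)}{\strategy^\time}$, and the $-\baseline^\time(\hist, \act)$ cancels the additive baseline, yielding exactly $\exputil{(\hist\act)}{\strategy^\time}$. The history-value identity then follows by linearity: taking the conditional expectation of $\baseutil{\hist}{\strategy^\time}{\term^\time} = \sum_{\act} \strategy^\time(\hist, \act) \baseutil{\hist, \act}{\strategy^\time}{\term^\time}$, substituting the action-value result, and recognizing $\sum_{\act \in \Actions(\hist)} \strategy^\time(\hist, \act) \exputil{(\hist\act)}{\strategy^\time} = \exputil{\hist}{\strategy^\time}$.

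I expect the only genuine obstacle to be bookkeeping the nested conditioning: justifying that the conditional expectation of the child value given that $\act$ was sampled equals precisely $\Exp[\term^\time]{\baseutil{(\hist\act)}{\strategy^\time}{\term^\time} \mid \term^\time \sqsupseteq (\hist\act)}$, so that the inductive hypothesis is applicable. This is exactly the factorization property noted above; once it is stated cleanly, the cancellation of $\sampling{\time}(\hist, \act)$ against $\nicefrac{1}{\sampling{\time}(\hist, \act)}$ and of the two baseline terms is purely mechanical. A minor point worth verifying is that the computation is valid for \emph{every} $\act \in \Actions(\hist)$, not only the sampled one: an off-sampled-path action contributes just its constant baseline on each realization, yet still satisfies the identity in expectation.
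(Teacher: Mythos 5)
Your proposal is correct and takes essentially the same route as the paper's proof: condition on the event $(\hist\act) \sqsubseteq \term^\time$ so that its conditional probability $\sampling{\time}(\hist,\act)$ cancels the importance weight and the two baseline terms cancel, then close the argument by backward induction through the tree, with the history-value identity following by linearity over $\act$. The only cosmetic difference is that the paper inducts on the height of $(\hist\act)$ in the tree (a deterministic quantity) rather than on the number of actions remaining to the random terminal $\term^\time$, which is the cleaner way to make your induction well-founded.
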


\begin{thm} Assume that we have a baseline that satisfies $\baseline^\time(\hist,\time) = \exputil{(\hist\act)}{\strategy^\time}$ for all $\hist \in \Histories$, $\act \in \Actions(\hist)$. Then for any $\hist, \act, \term^\time$,
\begin{equation*}
\Var[\term^\time]{\baseutil{\hist, \act}{\strategy^\time}{\term^\time} | \term^\time \sqsupseteq \hist} = 0
\end{equation*}
\label{thm:variance}
\end{thm}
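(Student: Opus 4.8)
The plan is to prove something strictly stronger than zero variance: that under the oracle baseline the corrected estimate is \emph{almost surely constant}, equal to the true expected utility. Concretely, I would show that for every $\hist \sqsubseteq \term^\time$ and every $\act \in \Actions(\hist)$, conditioned on $\term^\time \sqsupseteq \hist$, the baseline-corrected utilities satisfy $\baseutil{\hist, \act}{\strategy^\time}{\term^\time} = \exputil{(\hist\act)}{\strategy^\time}$ and $\baseutil{\hist}{\strategy^\time}{\term^\time} = \exputil{\hist}{\strategy^\time}$ with probability $1$ (not merely in expectation, as in Theorem~\ref{thm:unbiased}). Since a random variable that equals a fixed constant almost surely has variance $0$, the claimed identity $\Var[\term^\time]{\baseutil{\hist, \act}{\strategy^\time}{\term^\time} \mid \term^\time \sqsupseteq \hist} = 0$ then follows immediately.

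The argument is a backward induction along the sampled trajectory, running from the terminal $\term^\time$ up toward $\hist$, on the number of actions separating a node from $\term^\time$. For the base case I use $\baseutil{\term^\time}{\strategy^\time}{\term^\time} = \utility(\term^\time) = \exputil{\term^\time}{\strategy^\time}$, the last equality holding because a terminal history's expected utility is just its payoff. For the inductive step, fix $\hist \sqsubseteq \term^\time$ with $\hist \ne \term^\time$ and let $\act^\ast$ denote the unique action with $(\hist\act^\ast) \sqsubseteq \term^\time$ (unique precisely because we conditioned on $\hist$ lying on the sampled path). The crux is evaluating $\baseutil{\hist, \act}{\strategy^\time}{\term^\time}$ action by action. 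For $\act = \act^\ast$ the indicator $\indic((\hist\act^\ast) \sqsubseteq \term^\time)$ equals $1$, and the bracketed correction $\baseutil{(\hist\act^\ast)}{\strategy^\time}{\term^\time} - \baseline^\time(\hist,\act^\ast)$ vanishes exactly: its first summand equals $\exputil{(\hist\act^\ast)}{\strategy^\time}$ by the inductive hypothesis, and its second equals the same quantity by the oracle assumption. Hence the blow-up factor $\nicefrac{1}{\sampling{\time}(\hist,\act^\ast)}$ multiplies zero, leaving only $\baseline^\time(\hist,\act^\ast) = \exputil{(\hist\act^\ast)}{\strategy^\time}$. For every $\act \ne \act^\ast$ the indicator is $0$, so the estimate collapses directly to $\baseline^\time(\hist,\act) = \exputil{(\hist\act)}{\strategy^\time}$. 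In both cases $\baseutil{\hist, \act}{\strategy^\time}{\term^\time}$ equals the deterministic value $\exputil{(\hist\act)}{\strategy^\time}$, independent of which terminal was sampled.

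I would then close the induction at the history level by summing over actions: $\baseutil{\hist}{\strategy^\time}{\term^\time} = \sum_{\act \in \Actions(\hist)} \strategy^\time(\hist,\act)\,\exputil{(\hist\act)}{\strategy^\time} = \exputil{\hist}{\strategy^\time}$, using the standard one-step recursion for expected utility. This restores the hypothesis at $\hist$ and completes the induction. The step I would be most careful about is the exact cancellation inside the control variate: it relies \emph{simultaneously} on the oracle condition $\baseline^\time(\hist,\act) = \exputil{(\hist\act)}{\strategy^\time}$ and on the recursively established exactness of the child estimate, so the induction must propagate the strong ``almost-sure equality to the true value'' hypothesis down every level of the recursion, rather than the weaker unbiasedness of Theorem~\ref{thm:unbiased}. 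Once that cancellation holds, the importance weight $\nicefrac{1}{\sampling{\time}(\hist,\act)}$ is never multiplied by a nonzero quantity, which is exactly why all variance disappears.
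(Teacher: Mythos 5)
Your proposal is correct and follows essentially the same route as the paper: the paper's Lemma~\ref{lem:trueval} establishes the same almost-sure identity $\baseutil{\hist,\act}{\strategy^\time}{\term^\time} = \exputil{(\hist\act)}{\strategy^\time}$ by induction from the leaves, with the same key cancellation inside the control variate (child estimate and baseline both equal the true value), and then concludes that the variance of a constant is zero. The only cosmetic difference is that you induct along the sampled trajectory while the paper inducts on the height of $(\hist\act)$ over all histories, which changes nothing of substance.
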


All proofs are given in \iftoggle{appendix}{the appendix.}{the supplementary materials.} Theorem~\ref{thm:unbiased} show that we can use $\baseutil{\hist, \act}{\strategy^\time}{\term^\time}$ in place of $\samputil{\hist, \act}{\strategy^\time}{\term^\time}$ in equation~\ref{eq:sampledregret} and maintain the convergence guarantees of MCCFR. Theorem~\ref{thm:variance} shows that an ideal baseline eliminates all variance in the MCCFR update. By choosing our baseline well, we decrease the MCCFR variance and speed up its convergence. Pseudocode for MCCFR with baseline-corrected values is given in \iftoggle{appendix}{Appendix~\ref{sec:pseudocode}}{the supplementary materials}.

Although we focus on using our baseline-corrected samples in MCCFR, nothing in the value definition is particular to that algorithm. In fact, a lower variance estimate of sampled utilities is useful in any algorithm that performs iterative training using sampled trajectories. Examples of such algorithms include policy gradient methods \citep{Srinivasan18} and stochastic first-order methods \citep{Kroer15}.\todo{Other examples?}\todo{This paragraph here or in related work?}

\section{Baselines for EFGs}
\label{sec:baselines}

In this section we propose several baseline functions for use during iterative training. Theorem~\ref{thm:variance} shows that we can minimize variance by choosing a baseline function $\baseline^\time$ such that $\baseline^\time(\hist, \act) \approx \exputil{(\hist\act)}{\strategy^\time}$.

\paragraph{No baseline.}

We begin by examining MCCFR under its original definition, where no baseline function is used. We note that when we run baseline-corrected MCCFR with a static choice of $\baseline^\time(\hist, \act) = 0$ for all $\hist, \act$, the operation of the algorithm is identical to MCCFR. Thus, opting to not use a baseline is, in itself, a choice of a very particular baseline.

Using $\baseline^\time(\hist, \act) = 0$ might seem like a reasonable choice when we expect the game's payouts to be balanced between the players. However, even when the overall expected utility $\utility(\strategy)$ is very close to 0, there will usually be particular histories with high magnitude expected utility $\exputil{\hist}{\strategy}$. For example, in poker games, the expected utility of a history is heavily biased toward the player who has been dealt better cards, even if these biases cancel out when considered across all histories. In fact, often there is no strategy profile at all that satisfies $\exputil{(\hist\act)}{\strategy} = 0$, which makes $\baseline^\time(\hist, \act) = 0$ a poor choice in regards to the ideal criteria $\baseline^\time(\hist, \act) \approx \exputil{(\hist\act)}{\strategy^\time}$. An example game where a zero baseline performs very poorly is explored in Section~\ref{sec:results}.

\paragraph{Static strategy baseline.}

The simplest way to ensure that the baseline function does correspond to an actual strategy is to choose a static, known strategy profile $\strategy^\baseline \in \Strategies$ and let $\baseline^\time(\hist, \act) = \exputil{(\hist\act)}{\strategy^\baseline}$ for each time $t$. Once the strategy is chosen, the baseline values only need to be computed once and stored. In general this requires a full walk of the game tree, but it is sometimes possible to take advantage of the structure of the game to greatly reduce this cost. For an example, see Section~\ref{sec:results}.

\paragraph{Learned history baseline.}

Using a static strategy for our baseline ensures that it corresponds to some expected utility, but it fails to take advantage of the iterative nature of MCCFR. In particular, when attempting to estimate $\exputil{(\hist\act)}{\strategy^\time}$, we have access to all past samples $\baseutil{(\hist\act)}{\strategy^\tau}{\term^\tau}$ for $\tau < \time$. Because the strategy is changed incrementally, we might expect the expected utility to change slowly and for these to be reasonable samples of the utility at time $t$ as well.

Define $\Timesteps^{\hist\act}(\time) = \{ \tau < \time \mid (\hist\act) \sqsubseteq \term^\tau \}$ to be the set of timesteps on which $(\hist\act)$ was sampled, and denote the $j$th such timestep as $\tau_j$. We define the \emph{learned history baseline} as
\begin{equation}
\baseline^\time(\hist, \act) = \sum_{j=1}^{|\Timesteps^{\hist\act}(\time)|} \weight{j}\baseutil{(\hist\act)}{\strategy^{\tau_j}}{\term^{\tau_j}}
\end{equation}
where $(\weight{j})_{j=1}^{|\Timesteps^{\hist\act}(\time)|}$ is a sequence of weights satisfying $\sum_{j=1}^{|\Timesteps^{\hist\act}(\time)|} \weight{j} = 1$. Possible weighting choices include simple averaging, where $\weight{j}= 1/|\Timesteps^{\hist\act}(\time)|$, and exponentially-decaying averaging, where $\weight{j} = \alpha(1-\alpha)^{|\Timesteps^{\hist\act}(\time)| - j}$ for some $\alpha \in (0,1]$. In either case, the baseline can be efficiently updated online by tracking the weighted sum and the number of times that $(\hist\act)$ has been sampled.

\paragraph{Learned infoset baseline.}

The learned history baseline is very similar to the VR-MCCFR baseline defined by \citet{Schmid19}. The principle difference is that the VR-MCCFR baseline tracks values for each information set, rather than for each history; we thus refer to it as the \emph{learned infoset baseline}. This baseline also updates values for each player separately, based on their own information sets. This can be accomplished by tracking separate values for each player throughout the tree walk, or by running MCCFR with alternating updates, where only one player's regrets are updated on each tree walk. The VR-MCCFR baseline can be defined in our framework as
\begin{equation}
\baseline^\time(\hist, \act) = \baseline^\time(\my{\Iset}(\hist), \act)\qquad\text{where}\qquad
\baseline^\time(\my{\Iset}, \act) = \sum_{j=1}^{|\Timesteps^{\my{\Iset}\act}(\time)|} \weight{j}\baseutil{(\my{\Iset}[\term^{\tau_j}]\act)}{\strategy^{\tau_j}}{\term^{\tau_j}}
\end{equation}
where $\player$ is the player being updated, $\Timesteps^{\my{\Iset}\act}(\time)$ is the set of timesteps on which $(\hist'\act)$ was sampled for any $\hist' \in \my{\Iset}$, and $\tau_j$ is $j$th such timestep. Following Schmid et al. we consider both simple averaging and exponentially-decaying averaging for selecting the weights $\weight{j}$.

\paragraph{Predictive baseline.}

Our last baseline takes advantage of the recursive nature of the MCCFR update. On each iteration, each history along the sampled trajectory is evaluated and updated in depth-first order. Thus when the update of history $\hist \sqsubseteq \term^\time$ is complete and the value is returned , we have already calculated the next regrets $\Regret{\time+1}(\Iset(\hist'),\cdot)$ for all $\hist'$ such that $\hist \sqsubseteq \hist' \sqsubseteq \term^\time$. These values will be the input to the regret matching procedure on the next iteration, computing $\strategy^{\time+1}(\hist', \cdot)$ at these histories. Thus we can immediately compute this next strategy, and using the already sampled trajectory, compute an estimate of the strategy's utility as $\baseutil{(\hist\act)}{\strategy^{\time+1}}{\term^\time}$. This is an unbiased sample of the expected utility $\exputil{(\hist\act)}{\strategy^{\time+1}}$, which is our target value for the next baseline $\baseline^{\time+1}(\hist, \act)$. We thus use this sample to update the baseline:
\begin{equation}
\baseline^{\time+1}(\hist, \act) = \begin{cases}\baseutil{(\hist\act)}{\strategy^{\time+1}}{\term^\time}\qquad&\text{if~}(\hist\act) \sqsubseteq \term^\time\\
\baseline^\time(\hist, \act)&\text{otherwise}\end{cases}
\end{equation}

The computation for this update can be done efficiently by a simple change to MCCFR. In MCCFR, we compute $\baseutil{\hist}{\strategy^{\time}}{\term^\time}$ at each step by using $\strategy^\time$ to weight recursively-computed action values. In MCCFR with predictive baseline, after updating the regrets at $\hist$, we use a second regret matching computation to compute $\strategy^{\time+1}(\hist, \cdot)$. We use this strategy to weight a second set of recursively-computed action values to compute $\baseutil{\hist}{\strategy^{\time+1}}{\term^\time}$. When we walk back up the tree, we return both of the values $\baseutil{\hist}{\strategy^{\time}}{\term^\time}$ and $\baseutil{\hist}{\strategy^{\time+1}}{\term^\time}$, allowing this recursion to continue. The predictive value $\baseutil{\hist}{\strategy^{\time+1}}{\term^\time}$ is only used for updating the baseline function. These changes do not modify the asymptotic time complexity of MCCFR. Pseudocode is given in \iftoggle{appendix}{Appendix~\ref{sec:pseudocode}}{the supplementary materials}.

\todo{Talk about combining baselines?}

\section{Experimental comparison}
\label{sec:results}

We run our experiments using a commodity desktop machine in Leduc hold'em \citep{Southey05}, a small poker game commonly used as a benchmark in games research\footnote{An open source implementation of CFR+ and Leduc hold'em is available from the University of Alberta \citep{CPRGcode}.}. We compare the effect of the various baselines on the MCCFR convergence rate. Our experiments use the regret zeroing and linear averaging of CFR+, as these improve convergence when combined with any baseline. For the static strategy baseline, we use the ``always call'' strategy, which matches the opponent's bets and makes no bets of its own. Expected utility under this strategy is determined by the current size of the pot, which is measurable at run time, and the winning chance of each player's cards. Before training, we measure and store these odds for all possible sets of cards, which is significantly smaller than the size of the full game. For both of the learned baselines, we use simple averaging as it was found to work best in preliminary experiments.

We run experiments with two sampling strategies. The first is uniform sampling, in which $\sampling{\time}(\hist, \act) = 1/{|\Actions(\hist)|}$. The second is opponent on-policy sampling, depends on the player $\player$ being updated: we sample uniformly ($\sampling{\time}(\hist, \act) = 1/{|\Actions(\hist)|}$) at histories $\hist$ where $\Toact(\hist) = \player$, and sample on-policy ($\sampling{\time}(\hist, \act) = \strategy^\time(\hist, \act)$) otherwise. For consistency, we use alternating updates for both schemes.

\begin{figure*}[htb]
\centering
\begin{subfigure}[t]{0.34\linewidth}
\centering
  \includegraphics[height=2.1in]{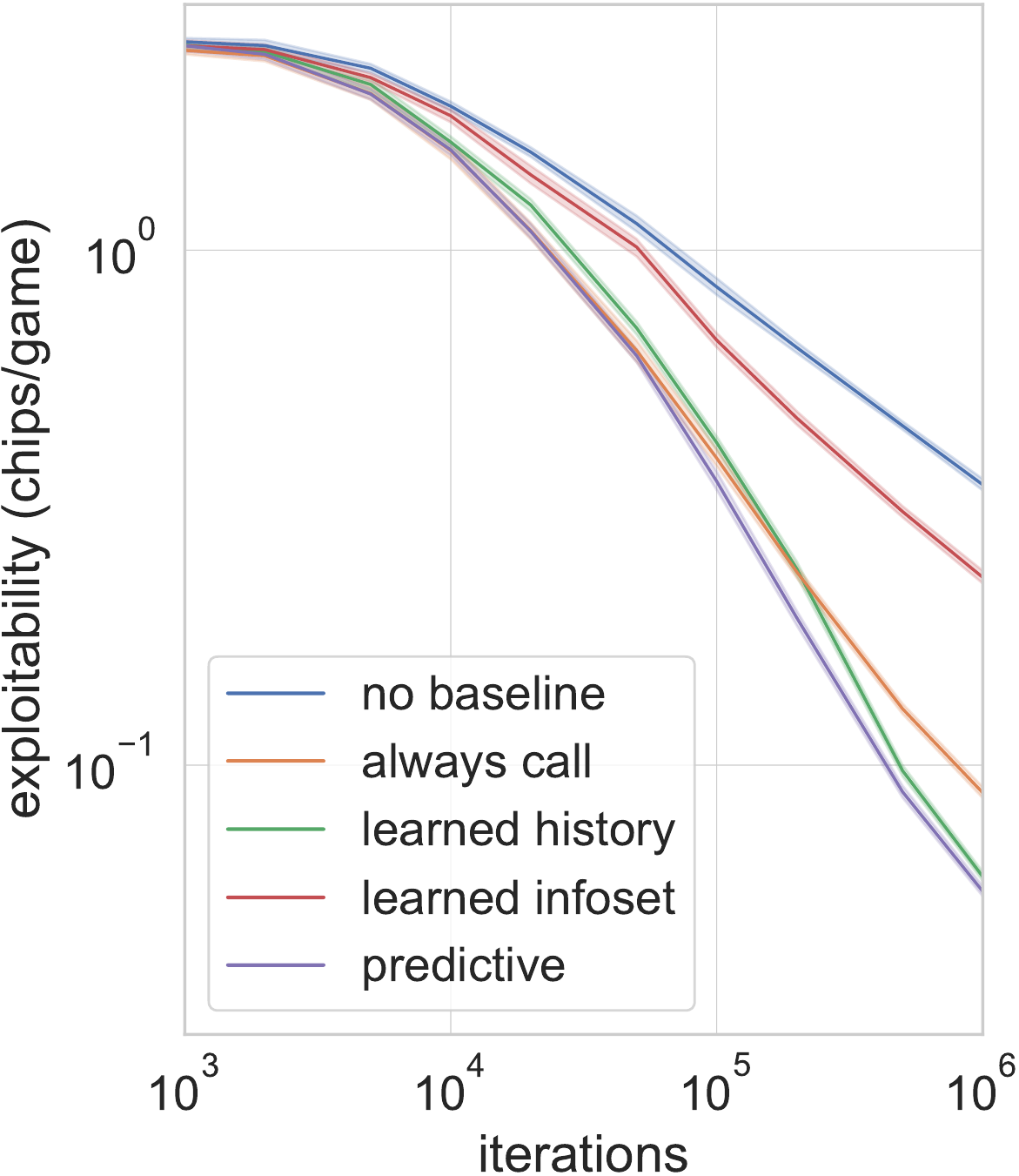}
  \caption{Uniform sampling}
  \label{fig:osexp}
\end{subfigure}%
\begin{subfigure}[t]{0.325\linewidth}
\centering
  \includegraphics[height=2.1in]{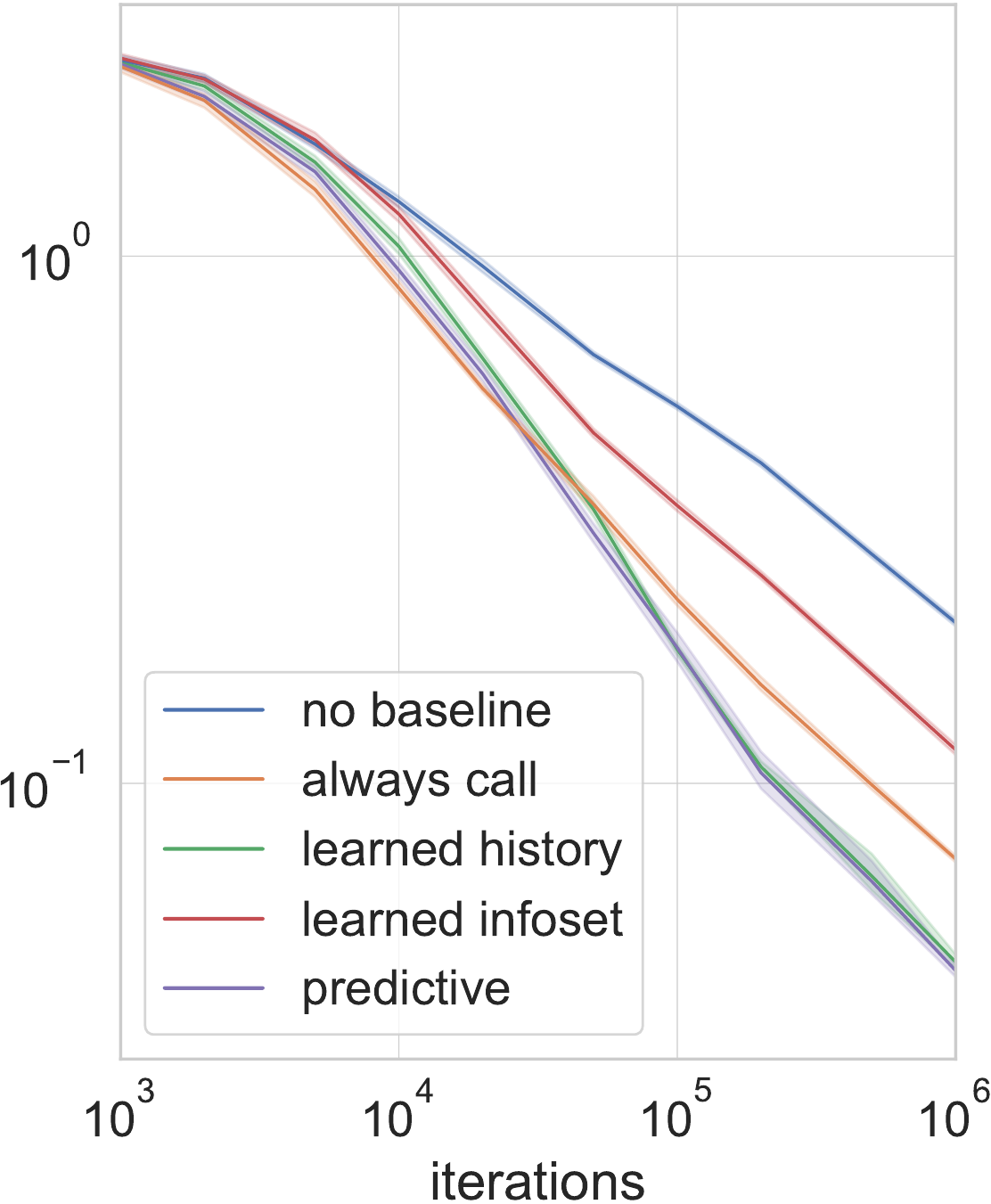}
  \caption{Opponent on-policy sampling}
  \label{fig:oposexp}
\end{subfigure}%
\begin{subfigure}[t]{0.325\linewidth}
\centering
  \includegraphics[height=2.1in]{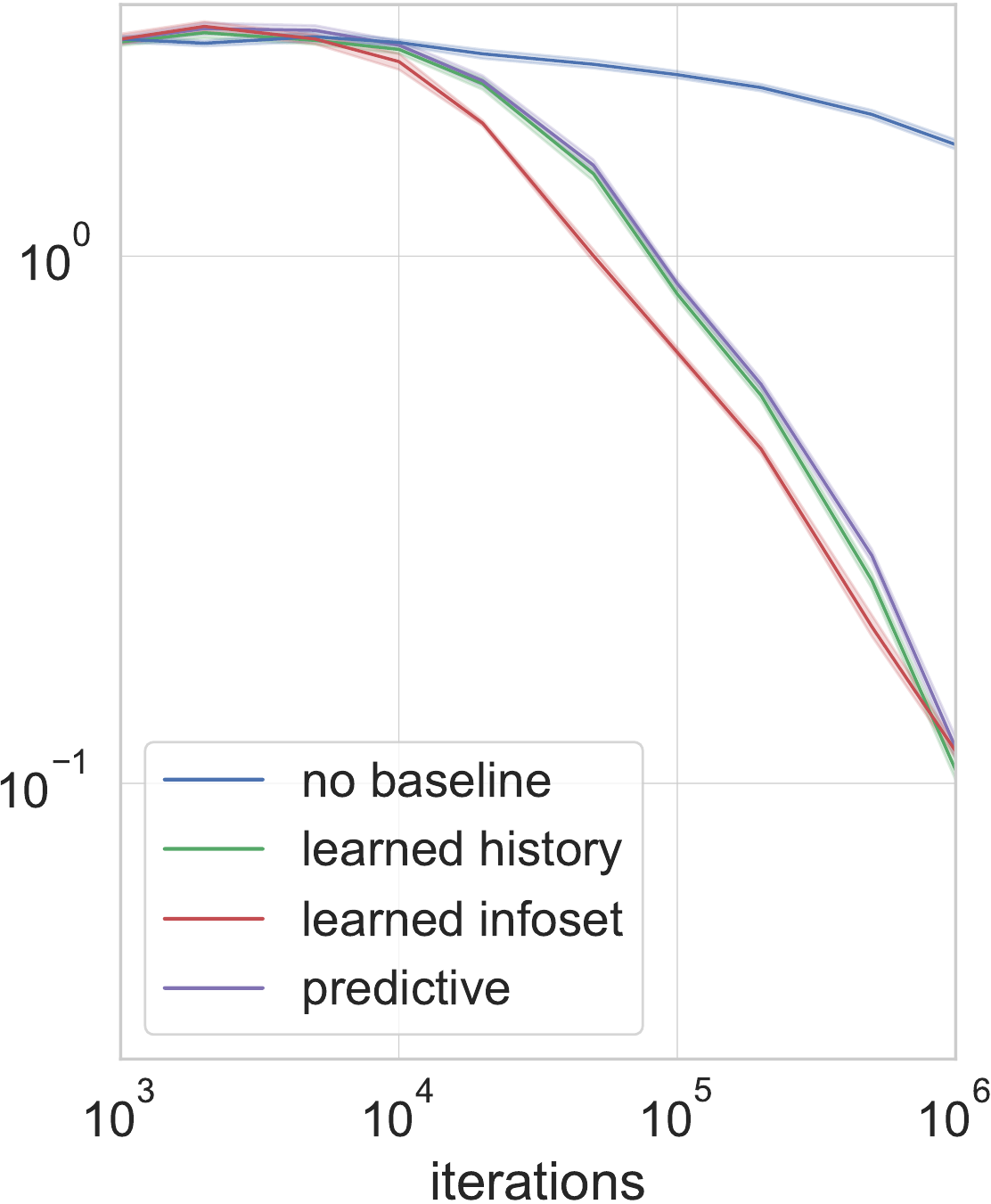}
  \caption{Shifted utilities}
  \label{fig:shift}
\end{subfigure}%
\caption{Log-log plots of convergence of MCCFR strategies with various baselines. \textbf{(a)} and \textbf{(b)} Leduc with different MCCFR sampling schemes. \textbf{(c)} Leduc with utilities shifted by 100 and opponent on-policy sampling.}
\end{figure*} 

Figures~\ref{fig:osexp}~and~\ref{fig:oposexp} show the convergence of MCCFR with the various baselines, as measured by exploitability (recall that exploitability converges to zero). All results in this paper are averaged over 20 runs, with 95\% confidence intervals shown as error bands (often too narrow to be visible). With uniform sampling, the learned infoset (VR-MCCFR) baseline improves modestly on using no baseline at all, while the other three baselines achieve a significant improvement on top of that. With opponent on-policy sampling, the gap is smaller, but the learned infoset baseline is still noticeably worse than the other options.

Many true expected values in Leduc are very close to zero, making MCCFR without a baseline (i.e. $\baseline^\time(\hist, \act) = 0$) better than it might otherwise be. To demonstrate the necessity of a baseline in some games, we ran MCCFR in a modified Leduc game where player 2 always transfers 100 chips to player 1 after every game. This utility change is independent of the player's actions, so it doesn't strategically change the game. However, it means that 0 is now an extremely inaccurate value estimate for all histories. Figure~\ref{fig:shift} shows convergence in Leduc with shifted utilities. The always call baseline is omitted, as the results would be identical to those in Figure~\ref{fig:oposexp}. Here we see that using any baseline at all provides a significant advantage over not using a baseline, due to the ability to adapt to the shifted utilities. We also see that the learned infoset baseline performs relatively well early on in this setting, because it generalizes across histories.

\section{Public Outcome Sampling}

Although the results in Section~\ref{sec:results} show large gains in convergence speed when using baselines with MCCFR, the magnitudes are not as large as those shown with the VR-MCCFR baseline by \citet{Schmid19}. This is because their experiments use a "vectorized" form of MCCFR, which avoids the sampling of individual histories within information sets. Instead, they track a vector of values on each iteration, one for each possible true history given the player's observed information set. Schmid et al. do not formally define their algorithm. We refer to it as \emph{Public Outcome Sampling (POS)} as the algorithm samples any actions that are publicly visible to both players, while exhaustively considering all possible private states. We give a full formal definition of POS in \iftoggle{appendix}{Appendix~\ref{sec:pubtrees}}{the supplementary materials}.

\subsection{Baselines in POS}

In MCCFR with POS, we still use action baselines $\baseline^\time(\hist, \act)$ with the ideal baseline values being $\baseline^\time(\hist,\act) = \exputil{(\hist\act)}{\strategy^\time}$. Thus the baselines in Section~\ref{sec:baselines} apply to this setting as well.

For the learned infoset baseline, we have more information available to us than in the OS case. This is because when POS samples some history-action pair $\hist,\act$, it also samples every pair $\hist',\act$ for $\hist' \in \Iset(\hist)$. Thus, rather, than using one sampled history value to update the baseline, we use a weighted sample of all of the history values. Following Schmid et al., we weight the baseline values
\begin{equation*}
\baseline^\time(\hist, \act) = \baseline^\time(\my{\Iset}(\hist), \act)\qquad\text{where}\qquad
\baseline^\time(\my{\Iset}, \act) = \sum_{j=1}^{|\Timesteps^{\my{\Iset}\act}(\time)|} \weight{j}\frac{\sum_{\hist' \in \my{\Iset}}\opp{\reach{\strategy^{\tau_j}}}(\hist')\baseutil{(\hist'\act)}{\strategy^{\tau_j}}{\term^{\tau_j}}}{\sum_{\hist' \in \my{\Iset}}\opp{\reach{\strategy^{\tau_j}}}(\hist')}.
\end{equation*}
This is the same relative weighting given to each history when calculating the counterfactual regret.

\paragraph{Zero-variance baseline.}

POS also has implications for the predictive baseline. In fact, we can guarantee that after every outcome of the game has been sampled, the predictive baseline will have learned the true value of the current strategy. For time $\time$, let $\Terminals^\time$ be the set of sampled terminal histories (consistent with a public outcome), and let $\text{samp}^\time(\hist)$ be the event that $\hist$ is sampled on way to $\Terminals^\time$.

\begin{thm}
If each of the terminal states $\Terminals[\hist]$ reachable from history $\hist \in \Histories$ has been sampled at least once under public outcome sampling \textup{($\Terminals[\hist] \subseteq \bigcup_{\tau < \time} \Terminals^\tau$)}, then the predictive baseline satisfies
\begin{equation*}
\baseline^\time(\hist,\act) = \exputil{(\hist\act)}{\strategy^\time} \textup{~~and~~} \Var[\Terminals^\time]{\baseutil{\hist}{\strategy^\time}{\Terminals^\time}|\mathrm{samp}^\time(\hist)} = 0\qquad \forall \act \in \Actions(\hist)
\end{equation*}%
\label{thm:zerovar}%
\end{thm}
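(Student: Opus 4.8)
The plan is to reduce the statement to the single claim that the predictive baseline is \emph{exact}, namely $\baseline^\time(\hist,\act) = \exputil{(\hist\act)}{\strategy^\time}$ for every $\act \in \Actions(\hist)$. Once that is in hand, the variance claim $\Var[\Terminals^\time]{\baseutil{\hist}{\strategy^\time}{\Terminals^\time}|\mathrm{samp}^\time(\hist)} = 0$ follows immediately from Theorem~\ref{thm:variance} in its public-outcome-sampling form: with an exact baseline at every descendant edge, each control-variate correction vanishes identically along the sampled outcome, so $\baseutil{\hist}{\strategy^\time}{\Terminals^\time}$ collapses deterministically to $\exputil{\hist}{\strategy^\time}$. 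The whole content therefore lies in proving exactness of $\baseline^\time$.

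I would prove exactness by induction on the game tree, from the terminals upward, establishing the combined invariant: if $g$ is a history sampled on some iteration $s$ and every terminal in $\Terminals[g]$ has been sampled on an iteration $\le s$, then the predictive history value computed on that iteration is exact, $\baseutil{g}{\strategy^{s+1}}{\Terminals^s} = \exputil{g}{\strategy^{s+1}}$. The base case is a terminal, where the value is literally $\utility(\term)$. For the inductive step at an internal history $\hist$, I would expand $\baseutil{\hist}{\strategy^{s+1}}{\Terminals^s}$ over its actions and treat each child edge separately. The crucial identity is that the predictive update defines $\baseline^{s+1}(\hist,\act)$ to be exactly the child's freshly computed predictive value $\baseutil{(\hist\act)}{\strategy^{s+1}}{\Terminals^s}$; using this new value as the control-variate baseline for the sampled action makes the correction term $\baseutil{(\hist\act)}{\strategy^{s+1}}{\Terminals^s} - \baseline^{s+1}(\hist,\act)$ identically zero, so the action value equals the child value, which is exact by the inductive hypothesis. (Note that using the \emph{old} baseline $\baseline^s$ here would fail, since the subtree strategy below $(\hist\act)$ generally changes between $s$ and $s+1$; this is why the predictive pass must reuse the newly computed descendant values.)

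For the actions $\act$ that are \emph{not} sampled on iteration $s$, the action value reduces to the stored baseline $\baseline^s(\hist,\act)$, and I must show this equals $\exputil{(\hist\act)}{\strategy^{s+1}}$. Two bookkeeping lemmas enter here. First, letting $\tau$ be the last iteration before $s$ on which $(\hist\act)$ was sampled, the coverage hypothesis forces every terminal in $\Terminals[(\hist\act)]$ to have been sampled on an iteration $\le\tau$ (sampling any such terminal requires sampling $(\hist\act)$), so the inductive hypothesis applies at iteration $\tau$ and gives $\baseline^s(\hist,\act) = \exputil{(\hist\act)}{\strategy^{\tau+1}}$. Second, a strategy-invariance lemma: since MCCFR updates regrets only inside the sampled subtree, no history strictly below $(\hist\act)$ is touched on iterations $\tau+1,\dots,s$, so $\strategy^{\tau+1}$ and $\strategy^{s+1}$ agree on the entire subtree below $(\hist\act)$, whence $\exputil{(\hist\act)}{\strategy^{\tau+1}} = \exputil{(\hist\act)}{\strategy^{s+1}}$. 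Combining the two cases, every action value equals $\exputil{(\hist\act)}{\strategy^{s+1}}$, and weighting by $\strategy^{s+1}(\hist,\cdot)$ yields $\exputil{\hist}{\strategy^{s+1}}$, closing the induction. Applying the same coverage-plus-invariance argument to $\baseline^\time(\hist,\act)$ itself — its last update was an exact predictive value, and the relevant subtree strategy has since been frozen — then gives $\baseline^\time(\hist,\act) = \exputil{(\hist\act)}{\strategy^\time}$.

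The step I expect to be the main obstacle is reconciling the \emph{single-iteration} predictive computation with the \emph{multi-iteration} coverage condition: the predictive value at $\hist$ on iteration $s$ explores only the one outcome $\Terminals^s$, yet it must come out exactly right for \emph{all} children, including those whose subtrees were last sampled long ago under earlier strategies. The resolution is precisely the coupling of the coverage bookkeeping (which guarantees each child's last-sampling iteration postdates the sampling of every terminal in that child's subtree) with the strategy-invariance lemma (which guarantees the stored baseline has not gone stale because the relevant strategy is frozen). Getting the ordering of the quantifiers right — over histories, actions, and iterations — is the delicate part; the control-variate algebra itself is routine once the invariant is correctly stated.
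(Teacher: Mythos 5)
Your proposal is correct, and it reaches the same two load-bearing facts as the paper's proof --- that the predictive pass sets $\baseline^{\time+1}(\hist,\act)$ to the child's freshly computed value so the sampled-edge correction cancels exactly, and that an unsampled edge's stored baseline cannot go stale because no regret (hence no strategy) strictly below $(\hist\act)$ changes on an iteration where $\Pubstate((\hist\act))$ is not on the sampled public trajectory --- but it organizes them differently. The paper introduces the partial utilities $\partutil{\time}{\term}$ (equal to $\utility(\term)$ once $\term$ has been sampled, $0$ otherwise) and proves the \emph{unconditional} invariant $\baseline^{\time}(\hist, \act) = \sum_{\term \in \Terminals[(\hist\act)]} \reach{\strategy^{\time}}((\hist\act), \term) \partutil{\time}{\term}$ for every $\hist,\act,\time$, via an induction on time (Lemma~\ref{lem:predinvariant}) whose sampled case invokes a separate tree induction (Lemma~\ref{lem:predupdate}); the coverage hypothesis is applied only once, at the very end, to replace $\partutil{\time}{\term}$ by $\utility(\term)$. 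You instead carry the coverage hypothesis through a single structural induction on the tree, and handle unsampled edges by locating the last iteration $\tau$ on which $(\hist\act)$ was sampled, noting that coverage of $\Terminals[(\hist\act)]$ must already hold at $\tau$ (reaching any such terminal forces $\Pubstate((\hist\act))$ onto the trajectory), and then invoking the frozen-strategy argument to transport $\exputil{(\hist\act)}{\strategy^{\tau+1}}$ forward to $\exputil{(\hist\act)}{\strategy^{\time}}$. Both are sound; the paper's decomposition buys a stronger intermediate statement (a characterization of the baseline as the expected utility over the already-seen part of the subtree, valid even before full coverage, which also explains the empirical behavior before the variance hits zero) and cleanly separates the tree recursion from the time recursion, whereas your version is more direct for the theorem as stated at the cost of heavier quantifier bookkeeping inside one induction. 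You are also slightly more explicit than the paper about the final step: deducing the zero-variance claim from exactness of the baseline at \emph{all} descendant edges of $\hist$ (which coverage of $\Terminals[\hist]$ guarantees) via the argument of Theorem~\ref{thm:variance} and Lemma~\ref{lem:trueval}.
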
%
The key idea behind the proof is that POS ensures that the baseline is updated at a history if and only if the expected value of the history changes. The full proof is in \iftoggle{appendix}{Appendix~\ref{sec:zerovar}}{the supplementary materials}.

In order for the theorem to hold everywhere in the tree, all outcomes must be sampled, which could take a large number of iterations. An alternative is to guarantee that all outcomes are sampled during the early iterations of MCCFR. For example, one could do a full CFR tree walk on the very first iteration, and then sample on subsequent iterations. Alternatively, we can ensure the theorem always holds with smart initialization of the baseline. When there are no regrets accumulated, MCCFR uses an arbitrary strategy. If we have some strategy with known expected values throughout the tree, we can use this strategy as the default MCCFR strategy and initialize the baseline values to the strategy's expected values. Either option guarantees that all regret updates will use zero-variance values.

\subsection{POS results}

As in Section~\ref{sec:results}, we run experiments in Leduc and use CFR+ updates. For the learned baselines, we use exponentially-decaying averaging with $\alpha=0.5$, which preliminary experiments found to outperform simple averaging when combined with POS. For simplicity and consistency with the experiments of \citet{Schmid19}, we use uniform sampling and simultaneous updates.

\begin{figure*}[pth]
\centering
\begin{subfigure}[t]{0.46\linewidth}
\centering
  \includegraphics[width=\linewidth]{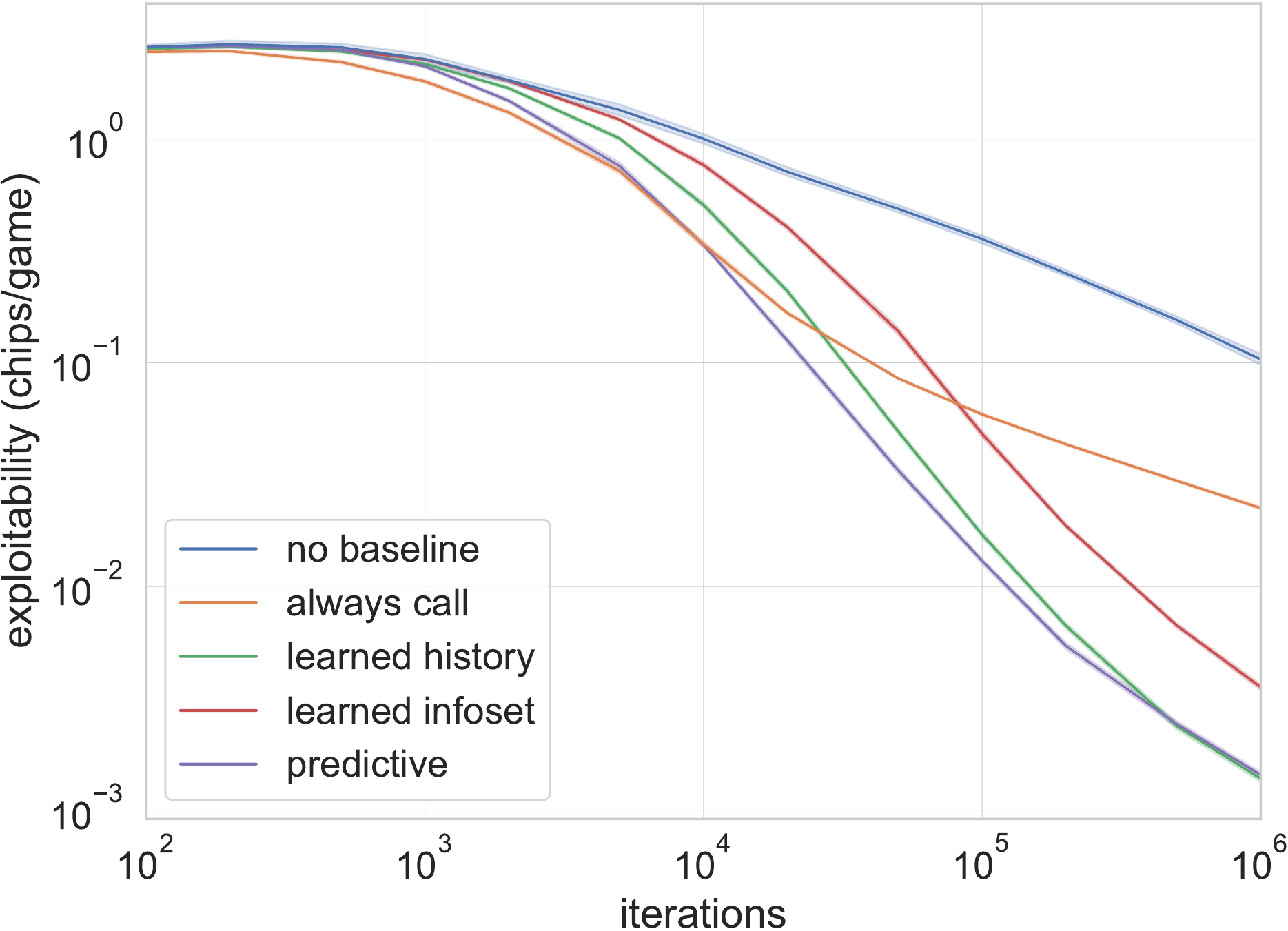}
  \caption{Exploitability}
  \label{fig:posexp}
\end{subfigure}\quad%
\begin{subfigure}[t]{0.46\linewidth}
\centering
  \includegraphics[width=\linewidth]{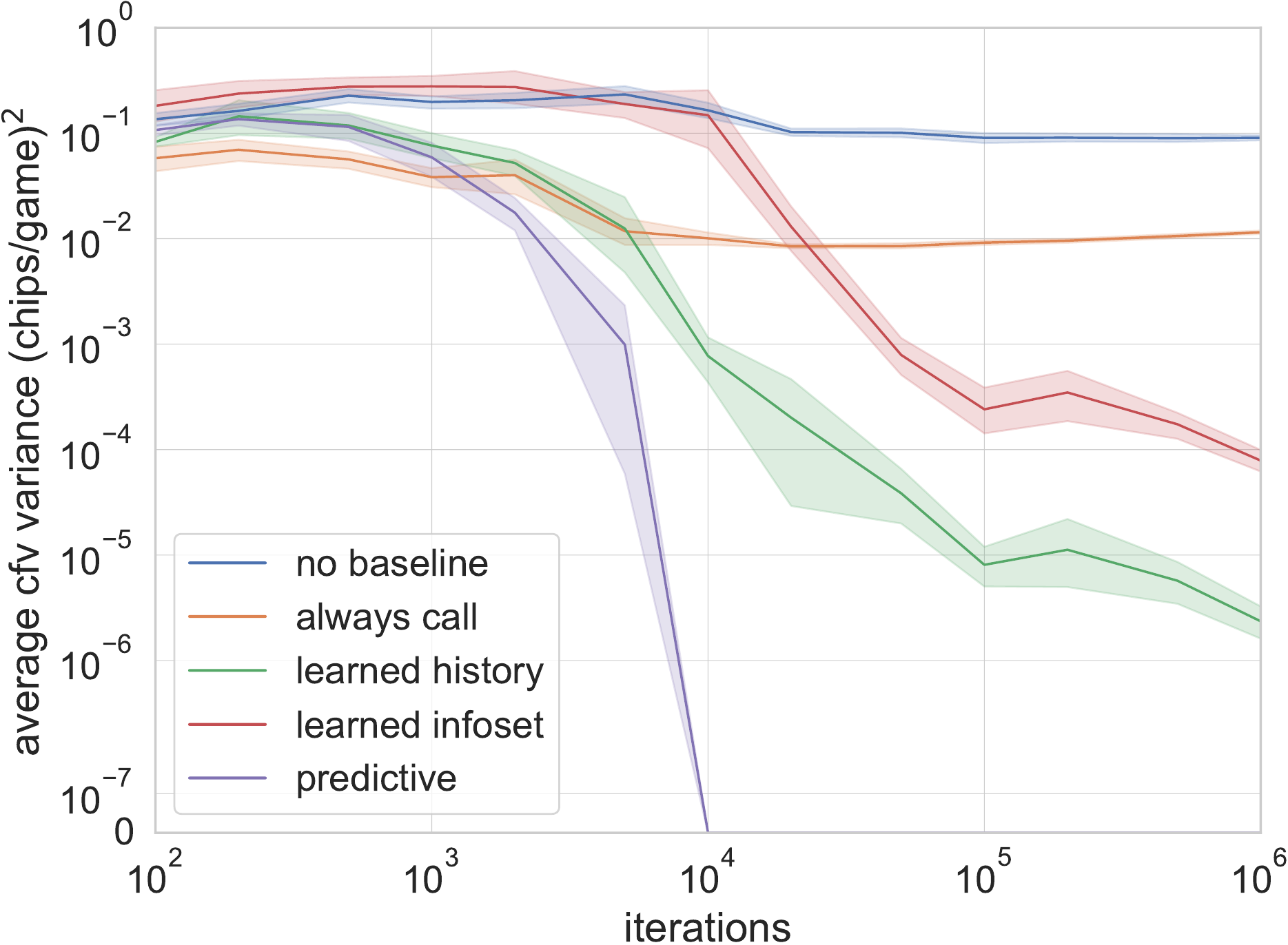}
  \caption{Counterfactual value variance}
  \label{fig:var}
\end{subfigure}%
\caption{Log-log plots of POS MCCFR strategies with various baselines. \textbf{(a)} Convergence as measured by exploitability. \textbf{(b)} Empirical variance of counterfactual values (cfvs).}
\end{figure*}

Figure~\ref{fig:posexp} compares the baselines' effects on POS MCCFR. We find that using any baseline provides a significant improvement on using no baseline. The always call baseline performs well early but tales off as it doesn't learn during training. Even with POS, where we always see an entire information set at a time, the learned infoset baseline (VR-MCCFR) is significantly outperformed by the learned history and predictive baselines. This is likely because the learned infoset baseline has to learn the relative weighting between histories in an infoset, while the other baselines always use the current strategy to weight the learned values. Finally, we observe that the predictive baseline has a small, but statistically significant, advantage over the learned history baseline in early iterations.

In addition, we compare the baselines by directly measuring their variance. We measure the variance of the counterfactual value $\sum_{\hist \in \Iset} \reach{\strategy^\time}_{-\Toact(\hist)}(\hist)\baseutil{(\hist\act)}{\strategy^\time}{\term^\time}$ for each pair $\Iset, \act$, and we average across all such pairs. Full details are in \iftoggle{appendix}{Appendix~\ref{sec:empvar}}{the supplementary materials}. Results are shown in Figure~\ref{fig:var}. We see that using no baseline results in high and relatively steady variance of counterfactual values. Using the always call baselines also results in steady variance, as nothing is learned, but at approximately an order of magnitude lower than no baseline. Variance with the other baselines improves over time, as the baseline becomes more accurate. The learned history baseline mirrors the learned infoset baseline, but with more than an order of magnitude reduction in variance. The predictive baseline is best of all, and in fact we see Theorem~\ref{thm:zerovar} in action as the variance drops to zero.

\section{Related Work}\todo{Do we need to talk about VR-MCCFR again here?}

As discussed in the introduction, the use of baseline functions has a long history in RL. Typically these approaches have used state value baselines, with some recent exceptions \citep{Liu18, Wu18}. \citet{Tucker18} suggest an explanation for this by isolating the variance terms that come from sampling an immediate action and from sampling the rest of a trajectory. Typical RL baselines only reduce the action variance, so the additional benefit from using a state-action baseline is insignificant when compared to the trajectory variance. In our work, we apply a recursive baseline to reduce both the action and trajectory variances, meaning state-action baselines give a noticeable benefit.

In RL, the doubly-robust estimator \citep{Jiang16} has been used to reduce variance settings by the recursive application of control variates \citep{Thomas16}. Similarly, variance reduction in EFGs via recursive control variates is the basis of the advantage sum estimator \citep{Zinkevich07} and AIVAT \citep{Burch18}. All of these techniques construct control variates by evaluating a static policy or strategy, either on the true game or on a static model. In this sense they are equivalent to our static strategy baseline. However, to the best of our knowledge, these techniques have only been used for evaluation of static strategies, rather than for variance reduction during training. Our work extends the EFG techniques to the training domain; we believe that similar ideas can be used in RL, and this is an interesting avenue of future research.

Concurrent to this work, \citet{Zhou18} also suggested tracking true values of histories in a CFR variant, analogous to our predictive baseline. They use these values for truncating full tree walks, rather than for variance reduction along sampled trajectories. As such, they always initialize their values with a full tree walk, and don't examine gradually learning the values during training.

\section{Conclusion and Future Work}

In this work we introduced a new framework for variance reduction in EFGs through the application of a baseline value function. We demonstrated that the existing VR-MCCFR baseline can be described in our framework with a specific baseline function, and we introduced other baseline functions that significantly outperform it in practice. In addition, we introduced a predictive baseline and showed that it gives provably optimal performance under a sampling scheme that we formally define.

There are three sources of variance when performing sampled updates in EFGs. The first is from sampling trajectory values, the second from sampling individual histories within an information set that is being updated, and the third from sampling which information sets will be updated on a given iteration. By introducing MCCFR with POS, we provably eliminate the first two sources of variance: the first because we have a zero-variance baseline, and the second because we consider all histories within the information set. For the first time, this allows us to select the MCCFR sampling strategy $\sampling{\time}$ entirely on the basis of minimizing the third source of variance, by choosing the ``best'' information sets to update. Doing this in a principled way is an exciting avenue for future research.

Finally, we close by discussing function approximation. All of the baselines introduced in this paper require an amount of memory that scales with the size of the game tree. In contrast, baseline functions in RL typically use function approximation, requiring a much smaller number of parameters. Additionally, these functions generalize across states, which can allow for learning an accurate baseline function more quickly. The framework that we introduce in this work is completely compatible with function approximation, and combining the two is an area for future research.

\small
\bibliographystyle{plainnat}
\bibliography{refs}

\iftoggle{appendix}{
    \newpage
    \appendix
    \appendixpage
    \section{MCCFR with baseline-corrected values}
\label{sec:pseudocode}

Pseudocode for MCCFR with baseline-corrected values is given in Algorithm~\ref{alg:bmccfr}. Quantities of the form $\strategy^\time(\hist, \cdot)$ refer to the vector of all quantities $\strategy^\time(\hist, \act)$ for $\act \in \Actions(\hist)$. A version for the predictive baseline, which must calculate extra values, is given in Algorithm~\ref{alg:predmccfr}. Each of these algorithms has the same worst-case iteration complexity as MCCFR without baselines, namely $\BigO{d|\Actions_{\text{max}}|}$ where $d$ is the tree's depth and $|\Actions_{\text{max}}| = \max_\hist |\Actions(\hist)|$.

\begin{algorithm}[htb]
\caption{MCCFR w/ baseline}
\label{alg:bmccfr}
\begin{algorithmic}[1]
\Statex
\Function{MCCFR}{$\hist$}
    \Iif{$\hist \in \Terminals$}{\Return $\utility(\hist)$}
    \State $\strategy^\time(\hist, \cdot) \gets \Call{RegretMatching}{\Regret{\time-1}(\Iset(\hist), \cdot)}$
    \State $\avgstrat^\time(\hist, \cdot) \gets \frac{\time-1}{\time} \avgstrat^{\time - 1}(\hist, \cdot) + \frac{1}{\time} \strategy^\time$
    \State sample action $\act \sim \sampling{\time}(\hist, \cdot)$
    \State $\baseutil{(\hist\act)}{\strategy^\time}{\term^\time} \gets \Call{MCCFR}{(\hist\act)}$\label{line:recursion}
    \State $\baseutil{\hist, \act'}{\strategy^\time}{\term^\time} \gets \baseline^\time(\hist, \act') \qquad \forall \act' \neq \act$
    \State $\baseutil{\hist, \act}{\strategy^\time}{\term^\time} \gets \baseline^\time(\hist, \act) + \frac{1}{\sampling{\time}(\hist, \act)}\left(\baseutil{(\hist\act)}{\strategy^\time}{\term^\time} - \baseline^\time(\hist, \act)\right)$
    \State $\baseutil{\hist}{\strategy^\time}{\term^\time} \gets \sum_{\act'} \strategy^\time(\hist, \act') \baseutil{\hist, \act'}{\strategy^\time}{\term^\time}$
    \If {$\Toact(\hist) = 1$}
        \State $\regret{\time}(\Iset(\hist), \act) \gets \frac{\reach{\strategy^\time}_2(\hist)}{\reach{\sampling{\time}}(\hist)}\left(\baseutil{\hist, \cdot}{\strategy^\time}{\term^\time} - \baseutil{\hist}{\strategy^\time}{\term^\time}\right)$
    \ElsIf {$\Toact(\hist) = 2$}
        \State $\regret{\time}(\Iset(\hist), \act) \gets \frac{\reach{\strategy^\time}_1(\hist)}{\reach{\sampling{\time}}(\hist)}\left({-\baseutil{\hist, \cdot}{\strategy^\time}{\term^\time}} + \baseutil{\hist}{\strategy^\time}{\term^\time}\right)$
    \EndIf
    \State $\Regret{\time}(\Iset(\hist), \cdot) \gets \Regret{\time-1}(\Iset(\hist), \cdot) + \regret{\time}(\Iset(\hist), \cdot)$\label{line:regupdate}
    \State $\baseline^{\time+1}(\hist,\act) \gets \Call{UpdateBaseline}{\baseline^\time(\hist,\act), \baseutil{(\hist\act)}{\strategy^\time}{\term^\time}}$
    \State \Return $\baseutil{\hist}{\strategy^\time}{\term^\time}$\label{line:return}
\EndFunction
\end{algorithmic}
\end{algorithm}

\begin{algorithm}[htb]
\caption{MCCFR w/ predictive baseline}
\label{alg:predmccfr}
\begin{algorithmic}[1]
\Statex
\Function{MCCFR}{$\hist$}
    \Iif{$\hist \in \Terminals$}{\Return $\utility(\hist), \utility(\hist)$}
    \State $\strategy^\time(\hist, \cdot) \gets \Call{RegretMatching}{\Regret{\time-1}(\Iset(\hist), \cdot)}$
    \State $\avgstrat^\time(\hist, \cdot) \gets \frac{\time-1}{\time} \avgstrat^{\time - 1}(\hist, \cdot) + \frac{1}{\time} \strategy^\time$
    \State sample action $\act \sim \sampling{\time}(\hist, \cdot)$
    \State $\baseutil{(\hist\act)}{\strategy^\time}{\term^\time}, \baseutil{(\hist\act)}{\strategy^{\time+1}}{\term^\time} \gets \Call{MCCFR}{(\hist\act)}$
    \State $\baseutil{\hist, \act'}{\strategy^\time}{\term^\time} \gets \baseline^\time(\hist, \act') \qquad \forall \act' \neq \act$
    \State $\baseutil{\hist, \act}{\strategy^\time}{\term^\time} \gets \baseline^\time(\hist, \act) + \frac{1}{\sampling{\time}(\hist, \act)}\left(\baseutil{(\hist\act)}{\strategy^\time}{\term^\time} - \baseline^\time(\hist, \act)\right)$
    \State $\baseutil{\hist}{\strategy^\time}{\term^\time} \gets \sum_{\act'} \strategy^\time(\hist, \act') \baseutil{\hist, \act'}{\strategy^\time}{\term^\time}$
    \If {$\Toact(\hist) = 1$}
        \State $\regret{\time}(\Iset(\hist), \act) \gets \frac{\reach{\strategy^\time}_2(\hist)}{\reach{\sampling{\time}}(\hist)}\left(\baseutil{\hist, \cdot}{\strategy^\time}{\term^\time} - \baseutil{\hist}{\strategy^\time}{\term^\time}\right)$
    \ElsIf {$\Toact(\hist) = 2$}
        \State $\regret{\time}(\Iset(\hist), \act) \gets \frac{\reach{\strategy^\time}_1(\hist)}{\reach{\sampling{\time}}(\hist)}\left({-\baseutil{\hist, \cdot}{\strategy^\time}{\term^\time}} + \baseutil{\hist}{\strategy^\time}{\term^\time}\right)$
    \EndIf
    \State $\Regret{\time}(\Iset(\hist), \cdot) \gets \Regret{\time-1}(\Iset(\hist), \cdot) + \regret{\time}(\Iset(\hist), \cdot)$
    \State $\baseline^{\time+1}(\hist,\act) \gets \baseutil{(\hist\act)}{\strategy^{\time+1}}{\term^\time}$
    \State $\strategy^{\time+1}(\hist, \cdot) \gets \Call{RegretMatching}{\Regret{\time}(\Iset(\hist), \cdot)}$
    \State $\baseutil{\hist, \act'}{\strategy^{\time+1}}{\term^\time} \gets \baseline^\time(\hist, \act') \qquad \forall \act' \neq \act$
    \State $\baseutil{\hist, \act}{\strategy^{\time+1}}{\term^\time} \gets \baseline^{\time+1}(\hist, \act)$
    \State $\baseutil{\hist}{\strategy^{\time+1}}{\term^\time} \gets \sum_{\act'} \strategy^{\time+1}(\hist, \act') \baseutil{\hist, \act'}{\strategy^{\time+1}}{\term^\time}$
    \State \Return $\baseutil{\hist}{\strategy^\time}{\term^\time}, \baseutil{\hist}{\strategy^{\time+1}}{\term^\time}$
\EndFunction
\end{algorithmic}
\end{algorithm}

\section{Proof of Theorem~\ref{thm:unbiased}}

This proof is a simplified version of the proof of Lemma~5 in \citet{Schmid19}.

We directly analyze the expectation of the baseline-corrected utility:
\begin{align*}
&\Exp[\term^\time]{\baseutil{\hist, \act}{\strategy^\time}{\term^\time} \mid \term^\time \sqsupseteq \hist}\\
&= \Prob{(\hist\act) \sqsubseteq \term^\time \mid \hist \sqsubseteq \term^\time}\left(\frac{1}{\sampling{\time}(\hist, \act)}\left(\Exp[\term^\time]{\baseutil{(\hist\act)}{\strategy^\time}{\term^\time} \mid \term^\time \sqsupseteq (\hist\act)} - \baseline^\time(\hist, \act)\right) + \baseline^\time(\hist, \act)\right)\\
&\quad + \Prob{(\hist\act) \not\sqsubseteq \term^\time \mid \hist \sqsubseteq \term^\time}(\baseline^\time(\hist, \act))\\
&= \sampling{\time}(\hist, \act)\left(\frac{1}{\sampling{\time}(\hist, \act)}\left(\Exp[\term^\time]{\baseutil{(\hist\act)}{\strategy^\time}{\term^\time} \mid \term^\time \sqsupseteq (\hist\act)} - \baseline^\time(\hist, \act)\right) + \baseline^\time(\hist, \act)\right)\\
&\quad + (1 - \sampling{\time}(\hist, \act))(\baseline^\time(\hist, \act))\\
&= \Exp[\term^\time]{\baseutil{(\hist\act)}{\strategy^\time}{\term^\time} \mid \term^\time \sqsupseteq (\hist\act)}
\end{align*}

We now proceed by induction on the height of $(\hist\act)$ in the tree. If $(\hist\act)$ has height 0, then $(\hist\act) \in \Terminals$ and $\Exp[\term^\time]{\baseutil{\hist, \act}{\strategy^\time}{\term^\time} \mid \term^\time \sqsupseteq \hist} = \Exp[\term^\time]{\baseutil{(\hist\act)}{\strategy^\time}{\term^\time} \mid \term^\time \sqsupseteq (\hist\act)} = \utility((\hist\act))$ by definition.

For the inductive step, consider arbitrary $\hist, \act$ such that $(\hist\act)$ has height more than 0. We assume that $\Exp[\term^\time]{\baseutil{\hist', \act'}{\strategy^\time}{\term^\time} \mid \term^\time \sqsupseteq \hist'} = \exputil{(\hist'\act')}{\strategy^\time}$ for all $\hist', \act'$ such that $(\hist'\act')$ has smaller height than $(\hist\act)$. We then have
\begin{align*}
&\hspace{-10pt}\Exp[\term^\time]{\baseutil{\hist, \act}{\strategy^\time}{\term^\time} \mid \term^\time \sqsupseteq \hist}\\
&= \Exp[\term^\time]{\baseutil{(\hist\act)}{\strategy^\time}{\term^\time} \mid \term^\time \sqsupseteq (\hist\act)}\\
&= \sum_{\act' \in \Actions((\hist\act))} \strategy^\time((\hist\act), \act') \Exp[\term^\time]{\baseutil{(\hist\act), \act'}{\strategy^\time}{\term^\time} \mid \term^\time \sqsupseteq (\hist\act)}\\
&= \sum_{\act' \in \Actions((\hist\act))} \strategy^\time((\hist\act), \act') \exputil{(\hist\act\act')}{\strategy^\time} &&\text{by inductive hypothesis}\\
&= \exputil{(\hist\act)}{\strategy^\time} &&\text{by definition}
\end{align*}
We are able to apply the inductive hypothesis because $(\hist\act\act')$ is a suffix of $(\hist\act)$ and thus must have smaller height. The proof follows by induction. \qed

\section{Proof of Theorem \ref{thm:variance}}

This proof is similar to the proof of Lemma~3 in \citet{Schmid19}.

We begin by proving that the assumption of the theorem necessitates that the baseline-corrected values are the true expected values.

\begin{lem} Assume that we have a baseline that satisfies $\baseline^\time(\hist,\act) = \exputil{(\hist\act)}{\strategy^\time}$ for all $\hist \in \Histories$, $\act \in \Actions(\hist)$. Then for any $\hist, \act, \term^\time$,
\begin{equation}
\baseutil{\hist, \act}{\strategy^\time}{\term^\time} = \exputil{(\hist\act)}{\strategy^\time}
\end{equation}
\label{lem:trueval}
\end{lem}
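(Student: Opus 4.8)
The plan is to prove the identity by induction on the height of the node $(\hist\act)$ in the game tree, mirroring the structure of the unbiasedness argument for Theorem~\ref{thm:unbiased} but tracking the realized value rather than its expectation. The guiding intuition is that when the baseline is exact, the control-variate correction in equation~(\ref{eq:baseline}) must vanish identically: the quantity $\baseutil{(\hist\act)}{\strategy^\time}{\term^\time} - \baseline^\time(\hist,\act)$ being subtracted is always zero, so $\baseutil{\hist,\act}{\strategy^\time}{\term^\time}$ collapses to the constant $\baseline^\time(\hist,\act) = \exputil{(\hist\act)}{\strategy^\time}$ regardless of which trajectory $\term^\time$ was sampled. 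Establishing this trajectory-independence is the whole content of the lemma, and it is exactly what yields the zero-variance conclusion of Theorem~\ref{thm:variance}.

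First I would dispose of the off-path case. If $(\hist\act) \not\sqsubseteq \term^\time$ then the indicator in equation~(\ref{eq:baseline}) is zero, so $\baseutil{\hist,\act}{\strategy^\time}{\term^\time} = \baseline^\time(\hist,\act) = \exputil{(\hist\act)}{\strategy^\time}$ by hypothesis, with no recursion needed. This leaves the on-path case, handled by induction. For the base case I take $(\hist\act)$ terminal (height $0$): then $\baseutil{(\hist\act)}{\strategy^\time}{\term^\time} = \utility((\hist\act)) = \exputil{(\hist\act)}{\strategy^\time}$ by definition, so the bracketed correction is zero and $\baseutil{\hist,\act}{\strategy^\time}{\term^\time} = \baseline^\time(\hist,\act) = \exputil{(\hist\act)}{\strategy^\time}$.

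For the inductive step, fix $(\hist\act)$ of positive height with $(\hist\act) \sqsubseteq \term^\time$, and assume the claim for all action-nodes of strictly smaller height. Expanding the history value via its recursive definition gives $\baseutil{(\hist\act)}{\strategy^\time}{\term^\time} = \sum_{\act'} \strategy^\time((\hist\act),\act')\,\baseutil{(\hist\act),\act'}{\strategy^\time}{\term^\time}$; each child node $(\hist\act\act')$ has strictly smaller height, so the inductive hypothesis replaces each summand by $\exputil{(\hist\act\act')}{\strategy^\time}$, and the recursive definition of expected utility collapses the sum to $\exputil{(\hist\act)}{\strategy^\time}$. Hence $\baseutil{(\hist\act)}{\strategy^\time}{\term^\time} = \exputil{(\hist\act)}{\strategy^\time} = \baseline^\time(\hist,\act)$, the correction term vanishes, and $\baseutil{\hist,\act}{\strategy^\time}{\term^\time} = \baseline^\time(\hist,\act) = \exputil{(\hist\act)}{\strategy^\time}$, completing the induction.

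The one place demanding care — the main obstacle — is that the recursion for the action value $\baseutil{\hist,\act}{\strategy^\time}{\term^\time}$ passes through the \emph{history} value $\baseutil{(\hist\act)}{\strategy^\time}{\term^\time}$ of the child rather than directly through child action values, so the induction must be organized to thread through both quantities cleanly. Concretely, I would verify that the height strictly decreases along the dependency $\baseutil{\hist,\act}{\strategy^\time}{\term^\time} \to \baseutil{(\hist\act)}{\strategy^\time}{\term^\time} \to \{\baseutil{(\hist\act),\act'}{\strategy^\time}{\term^\time}\}$, i.e.\ that every $(\hist\act\act')$ appearing in the expansion is a proper descendant of $(\hist\act)$ and hence of smaller height; finiteness of the tree guarantees this well-foundedness. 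Once the bookkeeping is in place the argument is entirely deterministic, with no probabilistic reasoning required, which is precisely why the resulting value has zero variance.
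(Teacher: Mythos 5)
Your proof is correct and follows essentially the same route as the paper's: induction on the height of $(\hist\act)$, showing at each step that the exact baseline makes the control-variate correction vanish identically so the value collapses to $\exputil{(\hist\act)}{\strategy^\time}$ independently of $\term^\time$. The only cosmetic difference is that you dispose of the off-path case explicitly up front, whereas the paper absorbs it by noting the bracketed difference is zero regardless of the indicator's value.
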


\begin{proof}
As for Theorem~\ref{thm:unbiased}, we prove this by induction on the height of $(\hist\act)$. If $(\hist\act) \in \Terminals$, then by definition $\baseutil{(\hist\act)}{\strategy^\time}{\term^\time} = \utility((\hist\act)) = \exputil{(\hist\act)}{\strategy^\time}$. This then means
\begin{align*}
&\hspace{-10pt}\baseutil{\hist, \act}{\strategy^\time}{\term^\time}\\
&= \frac{\indic((\hist\act) \sqsubseteq \term^\time)}{\sampling{\time}(\hist, \act)} \left(\baseutil{(\hist\act)}{\strategy^\time}{\term^\time} - \baseline^\time(\hist, \act)\right) + \baseline^\time(\hist, \act)\\
&= \frac{\indic((\hist\act) \sqsubseteq \term^\time)}{\sampling{\time}(\hist, \act)} \left(\exputil{(\hist\act)}{\strategy^\time} - \baseline^\time(\hist, \act)\right) + \baseline^\time(\hist, \act)\\
&= \frac{\indic((\hist\act) \sqsubseteq \term^\time)}{\sampling{\time}(\hist, \act)} \left(\exputil{(\hist\act)}{\strategy^\time} - \exputil{(\hist\act)}{\strategy^\time}\right) + \exputil{(\hist\act)}{\strategy^\time} &&\text{by lemma assumption}\\
&= \exputil{(\hist\act)}{\strategy^\time}
\end{align*}

For the inductive step, we assume that $\baseutil{\hist', \act'}{\strategy^\time}{\term^\time} = \exputil{(\hist'\act')}{\strategy^\time}$ for any $(\hist'\act')$ with smaller height than $(\hist\act)$. Then
\begin{align*}
\baseutil{(\hist\act)}{\strategy^\time}{\term^\time} &= \sum_{\act' \in \Actions((\hist\act))} \strategy^\time((\hist\act), \act') \baseutil{(\hist\act), \act'}{\strategy^\time}{\term^\time}\\
&= \sum_{\act' \in \Actions((\hist\act))} \strategy^\time((\hist\act), \act') \exputil{(\hist\act\act')}{\strategy^\time} &&\text{by inductive hypothesis}\\
&= \exputil{(\hist\act)}{\strategy^\time}
\end{align*}
and this in turn gives us
\begin{align*}
&\hspace{-10pt}\baseutil{\hist, \act}{\strategy^\time}{\term^\time}\\
&= \frac{\indic((\hist\act) \sqsubseteq \term^\time)}{\sampling{\time}(\hist, \act)} \left(\exputil{(\hist\act)}{\strategy^\time} - \baseline^\time(\hist, \act)\right) + \baseline^\time(\hist, \act)\\
&= \frac{\indic((\hist\act) \sqsubseteq \term^\time)}{\sampling{\time}(\hist, \act)} \left(\exputil{(\hist\act)}{\strategy^\time} - \exputil{(\hist\act)}{\strategy^\time}\right) + \exputil{(\hist\act)}{\strategy^\time} &&\text{by lemma assumption}\\
&= \exputil{(\hist\act)}{\strategy^\time}
\end{align*}
The proof of the lemma follows by induction.
\end{proof}

To finish the proof of Theorem~\ref{thm:unbiased}, we only have to note that
\begin{align*}
\Var[\term^\time]{\baseutil{\hist, \act}{\strategy^\time}{\term^\time} | \term^\time \sqsupseteq \hist} &= \Var[\term^\time]{\exputil{(\hist\act)}{\strategy^\time} | \term^\time \sqsupseteq \hist} &&\text{by Lemma~\ref{lem:trueval}}\\
&= 0
\end{align*}
The last step follows because the expected utility is not a random variable. \qed

\section{Further variance analysis}

Theorem~\ref{thm:variance} shows that if the baseline function exactly predicts the true expected utility, then the baseline-corrected sampled values will have zero variance. However, it doesn't show what happens to the variance when the baseline function only approximates the expected utility. In this section, we show that the variance is a function of the differences between the baseline estimates and the true expected values.

\setcounter{thm}{3}
\begin{thm}
For any baseline function $\baseline^\time$ and any $\hist, \act$
\begin{equation*}
\Var[\term^\time]{\baseutil{\hist, \act}{\strategy^\time}{\term^\time} | \term^\time \sqsupseteq \hist} \leq \sum_{(\hist'\act') \sqsupseteq (\hist\act)} \frac{(\reach{\strategy^\time}((\hist\act), (\hist'\act')))^2}{\reach{\sampling{\time}}(\hist, (\hist'\act'))} \left(\exputil{(\hist'\act')}{\strategy^\time} - \baseline^\time(\hist', \act') \right)^2
\end{equation*}
\label{thm:fullvar}
\end{thm}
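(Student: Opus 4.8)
The plan is to derive the bound from two one-step variance recursions and then unroll them by induction on the height of $(\hist\act)$ in the game tree. Throughout I abbreviate $V(\hist,\act) = \Var[\term^\time]{\baseutil{\hist,\act}{\strategy^\time}{\term^\time}\mid\term^\time\sqsupseteq\hist}$ and $W(\hist) = \Var[\term^\time]{\baseutil{\hist}{\strategy^\time}{\term^\time}\mid\term^\time\sqsupseteq\hist}$, and I write $\delta(\hist,\act) = \exputil{(\hist\act)}{\strategy^\time} - \baseline^\time(\hist,\act)$ for the baseline error. By Theorem~\ref{thm:unbiased} each corrected value is an unbiased estimate of its expected utility, so every variance equals the second moment of a mean-zero error; this is what lets me discard cross terms below.

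First I would establish an action recursion by conditioning on the event $(\hist\act)\sqsubseteq\term^\time$ (probability $\sampling{\time}(\hist,\act)$) and applying the law of total variance. The corrected value is the constant $\baseline^\time(\hist,\act)$ when $\act$ is not sampled, and equals $\baseline^\time(\hist,\act) + \frac{1}{\sampling{\time}(\hist,\act)}(\baseutil{(\hist\act)}{\strategy^\time}{\term^\time}-\baseline^\time(\hist,\act))$ otherwise. Using $\Exp[\term^\time]{\baseutil{(\hist\act)}{\strategy^\time}{\term^\time}\mid\term^\time\sqsupseteq(\hist\act)}=\exputil{(\hist\act)}{\strategy^\time}$ from Theorem~\ref{thm:unbiased}, the between-groups term contributes $\frac{1-\sampling{\time}(\hist,\act)}{\sampling{\time}(\hist,\act)}\delta(\hist,\act)^2$ and the within-group term contributes $\frac{1}{\sampling{\time}(\hist,\act)}W(\hist\act)$, yielding
\begin{equation*}
V(\hist,\act) = \frac{1-\sampling{\time}(\hist,\act)}{\sampling{\time}(\hist,\act)}\delta(\hist,\act)^2 + \frac{1}{\sampling{\time}(\hist,\act)}W(\hist\act) \le \frac{1}{\sampling{\time}(\hist,\act)}\Big(\delta(\hist,\act)^2 + W(\hist\act)\Big).
\end{equation*}

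Next I would establish a history recursion, and this is the step I expect to be the crux. Because outcome sampling draws exactly one action $\act^*$ at $\hist$, I condition on the identity of $\act^*$ (with probability $\sampling{\time}(\hist,\act^*)$) and again apply the law of total variance to $\baseutil{\hist}{\strategy^\time}{\term^\time}=\sum_{\act}\strategy^\time(\hist,\act)\baseutil{\hist,\act}{\strategy^\time}{\term^\time}$. The within-group term yields $\sum_{\act}\frac{\strategy^\time(\hist,\act)^2}{\sampling{\time}(\hist,\act)}W(\hist\act)$. The conditional mean, viewed as a function of $\act^*$, equals a fixed constant plus $\strategy^\time(\hist,\act^*)\delta(\hist,\act^*)/\sampling{\time}(\hist,\act^*)$, so its variance is $\sum_{\act}\frac{\strategy^\time(\hist,\act)^2}{\sampling{\time}(\hist,\act)}\delta(\hist,\act)^2 - \big(\sum_{\act}\strategy^\time(\hist,\act)\delta(\hist,\act)\big)^2$. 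The subtracted square, which is nonnegative, reflects the negative correlation between action-value errors induced by sampling only a single action; dropping it gives
\begin{equation*}
W(\hist) \le \sum_{\act\in\Actions(\hist)}\frac{\strategy^\time(\hist,\act)^2}{\sampling{\time}(\hist,\act)}\Big(\delta(\hist,\act)^2 + W(\hist\act)\Big).
\end{equation*}

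Finally I would combine the two recursions and unroll by induction on height. The base case is a terminal history, where $\baseutil{\hist}{\strategy^\time}{\term^\time}=\utility(\hist)$ is deterministic and $W(\hist)=0$. I would prove the auxiliary claim that $W(\hist)$ is bounded by $\sum_{(\hist'\act'):\,\hist\sqsubseteq\hist'}\frac{(\reach{\strategy^\time}(\hist,(\hist'\act')))^2}{\reach{\sampling{\time}}(\hist,(\hist'\act'))}\delta(\hist',\act')^2$, the same form of sum taken over the subtree strictly below $\hist$; substituting the child bounds into the history recursion and using the multiplicativity of reach probabilities, $\reach{\sampling{\time}}(\hist,(\hist'\act'))=\sampling{\time}(\hist,\act)\,\reach{\sampling{\time}}((\hist\act),(\hist'\act'))$ and likewise for $\strategy^\time$, shows that the weight $\strategy^\time(\hist,\act)^2/\sampling{\time}(\hist,\act)$ accumulated at the step $\hist\to\hist\act$ telescopes correctly into the deeper terms. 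The theorem then follows by plugging this auxiliary bound on $W(\hist\act)$ into the action recursion and reading the leading $\frac{1}{\sampling{\time}(\hist,\act)}\delta(\hist,\act)^2$ as the $(\hist'\act')=(\hist\act)$ summand. The main bookkeeping obstacle is verifying that the reach factors telescope so that the denominator for term $(\hist'\act')$ becomes $\reach{\sampling{\time}}(\hist,(\hist'\act'))$ while its numerator becomes $(\reach{\strategy^\time}((\hist\act),(\hist'\act')))^2$, matching the claimed sum exactly.
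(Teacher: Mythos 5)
Your proposal is correct and follows essentially the same route as the paper's proof: the same two applications of the law of total variance (conditioning on the sampled action at $\hist$, and on the indicator $\indic((\hist\act)\sqsubseteq\term^\time)$), the same bound dropping the nonnegative squared mean $\bigl(\sum_{\act}\strategy^\time(\hist,\act)\delta(\hist,\act)\bigr)^2$, and the same induction on height with telescoping reach probabilities. The only cosmetic difference is that the paper first establishes an exact variance decomposition over the subtree (its Lemma~\ref{lem:fullvar}) and applies the second-moment inequality afterwards, whereas you fold that inequality into the one-step recursion before unrolling.
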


Theorem~\ref{thm:fullvar} shows that as the baseline estimates $\baseline^\time(\hist, \act)$ approach the expected values $\exputil{(\hist\act)}{\strategy^\time}$, the variance converges to zero. It also establishes a bound on how quickly it happens.

Before proving Theorem~\ref{thm:fullvar}, we first examine how the full (trajectory) variance can be decomposed into contributions from individual actions.

\begin{lem}
For any baseline function $\baseline^\time$ and any $\hist \in \Histories$
\begin{align*}
\Var[\term^\time]{\baseutil{\hist}{\strategy^\time}{\term^\time} | \term^\time \sqsupseteq \hist} = &\sum_{\act \in \Actions(\hist)} \frac{\left(\strategy^\time(\hist, \act)\right)^2}{\sampling{\time}(\hist, \act)} \Var[\term^\time]{\baseutil{(\hist\act)}{\strategy^\time}{\term^\time}}\\ &~+ \Var[\act]{\frac{\strategy^\time(\hist, \act)}{\sampling{\time}(\hist, \act)} \left(\exputil{(\hist\act)}{\strategy^\time} - \baseline^\time(\hist, \act) \right)}\\
\end{align*}
\label{lem:recvar}
\end{lem}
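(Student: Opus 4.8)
The plan is to apply the law of total variance, conditioning on the single action sampled at $\hist$. Throughout, I treat the baseline function $\baseline^\time$ as deterministic with respect to the current trajectory: although a learned baseline depends on the past trajectories $\term^\tau$ for $\tau < \time$, these are independent of $\term^\time$, so $\baseline^\time(\hist,\cdot)$ is a constant for the purposes of this variance computation.

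First I would expand the definition and exploit the fact that outcome sampling draws exactly one action $A \sim \sampling{\time}(\hist, \cdot)$ at $\hist$. In the sum $\baseutil{\hist}{\strategy^\time}{\term^\time} = \sum_{\act \in \Actions(\hist)} \strategy^\time(\hist, \act)\baseutil{\hist, \act}{\strategy^\time}{\term^\time}$, every non-sampled action contributes exactly its baseline $\baseline^\time(\hist, \act)$ (since the indicator in Equation~\ref{eq:baseline} vanishes), while the sampled action $A$ contributes the importance-weighted correction. Collecting the deterministic terms into the constant $c = \sum_{\act} \strategy^\time(\hist, \act)\baseline^\time(\hist, \act)$, this yields
\[
\baseutil{\hist}{\strategy^\time}{\term^\time} = c + \frac{\strategy^\time(\hist, A)}{\sampling{\time}(\hist, A)}\left(\baseutil{(\hist A)}{\strategy^\time}{\term^\time} - \baseline^\time(\hist, A)\right).
\]

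Next I would apply the law of total variance $\Var[\term^\time]{X \mid \term^\time \sqsupseteq \hist} = \Exp[A]{\Var{X \mid A}} + \Var[A]{\Exp{X \mid A}}$ with $X = \baseutil{\hist}{\strategy^\time}{\term^\time}$, using that conditioning on $A = \act$ makes the continuation identical to sampling from $(\hist\act)$, i.e.\ the event $\term^\time \sqsupseteq (\hist\act)$. For the inner term, given $A = \act$ the prefactor $\strategy^\time(\hist, \act)/\sampling{\time}(\hist, \act)$, the baseline, and $c$ are all constants, so $\Var{X \mid A = \act} = (\strategy^\time(\hist, \act)/\sampling{\time}(\hist, \act))^2\,\Var[\term^\time]{\baseutil{(\hist\act)}{\strategy^\time}{\term^\time} \mid \term^\time \sqsupseteq (\hist\act)}$; averaging over $A$ cancels one factor of $\sampling{\time}(\hist, \act)$ and reproduces the first sum. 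For the outer term, Theorem~\ref{thm:unbiased} gives $\Exp[\term^\time]{\baseutil{(\hist\act)}{\strategy^\time}{\term^\time} \mid \term^\time \sqsupseteq (\hist\act)} = \exputil{(\hist\act)}{\strategy^\time}$, so the conditional mean equals $c + (\strategy^\time(\hist, A)/\sampling{\time}(\hist, A))(\exputil{(\hist A)}{\strategy^\time} - \baseline^\time(\hist, A))$; the constant $c$ drops out of the variance, leaving exactly the second term.

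The main obstacle is bookkeeping rather than depth: I must cleanly separate the randomness of the action draw at $\hist$ (a single multinomial over $\Actions(\hist)$) from the randomness of the continuation in the subtree, and invoke unbiasedness (Theorem~\ref{thm:unbiased}) at precisely the conditioning event $\term^\time \sqsupseteq (\hist\act)$. The one genuinely load-bearing observation is that conditioning on $A = \act$ coincides with the subtree-sampling law from $(\hist\act)$, which is what licenses substituting $\Var[\term^\time]{\baseutil{(\hist\act)}{\strategy^\time}{\term^\time}}$ and $\exputil{(\hist\act)}{\strategy^\time}$ into the two pieces. Once these conditionings are lined up, the remaining manipulation is a single cancellation.
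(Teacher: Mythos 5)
Your proposal is correct and follows essentially the same route as the paper's proof: both apply the law of total variance conditioning on the action sampled at $\hist$, isolate the single trajectory-dependent term $\frac{\strategy^\time(\hist,\act)}{\sampling{\time}(\hist,\act)}\baseutil{(\hist\act)}{\strategy^\time}{\term^\time}$ from the deterministic baseline terms, and invoke unbiasedness to evaluate the conditional expectation before letting the constant drop out of the outer variance. The only cosmetic difference is that you make the constant $c$ and the conditioning-event identification $\{A=\act\}=\{\term^\time \sqsupseteq (\hist\act)\}$ explicit, which the paper leaves implicit.
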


\begingroup
\allowdisplaybreaks

\begin{proof}
We use the law of total variance, conditioning on which $\act$ is sampled at $\hist$. This gives us
\begin{align}
&\Var[\term^\time]{\baseutil{\hist}{\strategy^\time}{\term^\time}  | \term^\time \sqsupseteq \hist}\nonumber\\&\quad = \Exp[\act]{\Var[\term^\time]{\baseutil{\hist}{\strategy^\time}{\term^\time} \middle| \term^\time \sqsupseteq (\hist\act)}} + \Var[\act]{\Exp[\term^\time]{\baseutil{\hist}{\strategy^\time}{\term^\time} \middle| \term^\time \sqsupseteq (\hist\act)}}\label{eq:ltv}
\end{align}
We analyze each of these terms separately.

First, to analyze the left summand in (\ref{eq:ltv}), we note that if $\hist\act \sqsubset \term^\time$ then by the recursive definition of baseline-corrected values
\begin{align*}
\baseutil{\hist}{\strategy^\time}{\term^\time} = \frac{\strategy^\time(\hist, \act)}{\sampling{\time}(\hist, \act)} \baseutil{(\hist\act)}{\strategy^\time}{\term^\time} - \frac{\strategy^\time(\hist, \act)}{\sampling{\time}(\hist, \act)} \baseline^\time(\hist, \act) +  \sum_{\act' \in \Actions(\hist)} \strategy^\time(\hist, \act') \baseline^\time(\hist, \act')
\end{align*}
Only the first term depends on the sampled trajectory $\term^\time$, and thus
\begin{align}
\Exp[\act]{\Var[\term^\time]{\baseutil{\hist}{\strategy^\time}{\term^\time} \middle| \term^\time \sqsupseteq (\hist\act)}} &= \Exp[\act]{\Var[\term^\time]{\frac{\strategy^\time(\hist, \act)}{\sampling{\time}(\hist, \act)} \baseutil{(\hist\act)}{\strategy^\time}{\term^\time} \middle| \term^\time \sqsupseteq (\hist\act)}}\nonumber\\
&= \Exp[\act]{\left(\frac{\strategy^\time(\hist, \act)}{\sampling{\time}(\hist, \act)}\right)^2\Var[\term^\time]{\baseutil{(\hist\act)}{\strategy^\time}{\term^\time} \middle| \term^\time \sqsupseteq (\hist\act)}}\nonumber\\
&=\sum_{\act \in \Actions(\hist)} \frac{\left(\strategy^\time(\hist, \act)\right)^2}{\sampling{\time}(\hist, \act)} \Var[\term^\time]{\baseutil{(\hist\act)}{\strategy^\time}{\term^\time} \middle| \term^\time \sqsupseteq (\hist\act)}\label{eq:lem2eq2}
\end{align}

Next, we analyze the inner expectation of the right summand of (\ref{eq:ltv})
\begin{align*}
&\hspace{-30pt}\Exp[\term^\time]{\baseutil{\hist}{\strategy^\time}{\term^\time} \middle| \term^\time \sqsupseteq (\hist\act)}\\
&=  \sum_{\act'} \strategy^\time(\hist, \act') \baseline^\time(\hist, \act') + \frac{\strategy^\time(\hist, \act)}{\sampling{\time}(\hist, \act)} \left(\Exp[\term^\time]{\baseutil{(\hist\act)}{\strategy^\time}{\term^\time}} - \baseline^\time(\hist, \act) \right)\\
&= \sum_{\act'} \strategy^\time(\hist, \act') \baseline^\time(\hist, \act') + \frac{\strategy^\time(\hist, \act)}{\sampling{\time}(\hist, \act)} \left(\exputil{(\hist\act)}{\strategy^\time} - \baseline^\time(\hist, \act) \right)
\end{align*}
The first term here doesn't depend on the sampled $\act$, giving us
\begin{equation}
\Var[\act]{\Exp[\term^\time]{\baseutil{\hist}{\strategy^\time}{\term^\time} \middle| (\hist\act) \sqsubseteq \term^\time}} = \Var[\act]{\frac{\strategy^\time(\hist, \act)}{\sampling{\time}(\hist, \act)} \left(\exputil{(\hist\act)}{\strategy^\time} - \baseline^\time(\hist, \act) \right)}\label{eq:lem2eq3}
\end{equation}

Combining (\ref{eq:ltv}), (\ref{eq:lem2eq2}), and (\ref{eq:lem2eq3}) completes the proof.
\end{proof}

Lemma~\ref{lem:recvar} decomposes the variance into a part from the immediately sampled action, and a part from the remainder of the sampled trajectory. We extend this to completely decompose the trajectory variance.

\begin{lem}
For any baseline function $\baseline^\time$ and any $\hist, \act$
\begin{equation*}
\Var[\term^\time]{\baseutil{\hist}{\strategy^\time}{\term^\time} | \term^\time \sqsupseteq \hist} = \sum_{\hist' \sqsupseteq \hist} \frac{(\reach{\strategy^\time}(\hist, \hist'))^2}{\reach{\sampling{\time}}(\hist, \hist')} \Var[\act']{\frac{\strategy^\time(\hist', \act')}{\sampling{\time}(\hist', \act')} \left(\exputil{(\hist'\act')}{\strategy^\time} - \baseline^\time(\hist', \act') \right)}
\end{equation*}
\label{lem:fullvar}
\end{lem}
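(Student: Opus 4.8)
The plan is to prove the identity by induction on the height of $\hist$ in the game tree, unrolling the one-step recursion of Lemma~\ref{lem:recvar}. To keep the bookkeeping clean I would abbreviate the trajectory variance as $V(\hist) = \Var[\term^\time]{\baseutil{\hist}{\strategy^\time}{\term^\time} \mid \term^\time \sqsupseteq \hist}$ and the single-action contribution as $L(\hist) = \Var[\act']{\frac{\strategy^\time(\hist, \act')}{\sampling{\time}(\hist, \act')}(\exputil{(\hist\act')}{\strategy^\time} - \baseline^\time(\hist, \act'))}$. With this notation Lemma~\ref{lem:recvar} reads $V(\hist) = L(\hist) + \sum_{\act} \frac{(\strategy^\time(\hist, \act))^2}{\sampling{\time}(\hist, \act)} V(\hist\act)$, where the child variances are understood to be conditioned on $\term^\time \sqsupseteq (\hist\act)$ exactly as in its proof, and the goal becomes $V(\hist) = \sum_{\hist' \sqsupseteq \hist} \frac{(\reach{\strategy^\time}(\hist, \hist'))^2}{\reach{\sampling{\time}}(\hist, \hist')} L(\hist')$.

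For the base case, when $\hist$ is terminal its height is $0$: given $\term^\time \sqsupseteq \hist$ the corrected value $\baseutil{\hist}{\strategy^\time}{\term^\time}$ equals the constant $\utility(\hist)$, so $V(\hist) = 0$, while the right-hand sum is empty since there is no non-terminal $\hist' \sqsupseteq \hist$; both sides vanish. For the inductive step I would assume the identity holds for every child $\hist\act$ (each of strictly smaller height) and substitute the hypothesis $V(\hist\act) = \sum_{\hist' \sqsupseteq (\hist\act)} \frac{(\reach{\strategy^\time}((\hist\act), \hist'))^2}{\reach{\sampling{\time}}((\hist\act), \hist')} L(\hist')$ into the recursion. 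The crucial algebraic step is the multiplicative factorization of reach probabilities: for $(\hist\act) \sqsubseteq \hist'$ we have $\reach{\strategy^\time}(\hist, \hist') = \strategy^\time(\hist, \act)\,\reach{\strategy^\time}((\hist\act), \hist')$, and likewise for $\sampling{\time}$, so the per-action prefactor $\frac{(\strategy^\time(\hist, \act))^2}{\sampling{\time}(\hist, \act)}$ merges with the inner weight to give precisely $\frac{(\reach{\strategy^\time}(\hist, \hist'))^2}{\reach{\sampling{\time}}(\hist, \hist')}$. After this merge the double sum $\sum_{\act}\sum_{\hist' \sqsupseteq (\hist\act)}$ collapses to a single sum $\sum_{\hist' \sqsupset \hist}$ over the strict descendants of $\hist$, since by the tree structure each such $\hist'$ descends from $\hist$ through a unique first action. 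Adding back the $L(\hist)$ term — which is exactly the $\hist' = \hist$ summand, where both reach ratios equal $1$ — completes the sum to $\sum_{\hist' \sqsupseteq \hist}$ and closes the induction.

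I expect the reach-probability factorization together with the reindexing of the double sum to be the only delicate point: one must check that every strict descendant $\hist'$ of $\hist$ is counted exactly once, via its unique first action out of $\hist$, and that the conditioning events line up so that the child variances appearing in Lemma~\ref{lem:recvar} are indeed the conditional quantities $V(\hist\act)$ to which the inductive hypothesis applies. Everything else reduces to direct substitution.
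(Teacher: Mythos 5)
Your proposal is correct and follows essentially the same route as the paper's proof: induction on the height of $\hist$, unrolling the one-step decomposition of Lemma~\ref{lem:recvar}, merging the per-action prefactor into the reach-probability ratios, and absorbing the immediate-action variance term as the $\hist' = \hist$ summand. The only cosmetic difference is in the base case, where the paper simply notes that $\Actions(\hist) = \emptyset$ forces both sides to vanish.
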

\begin{proof}
We proceed by induction on the height of $\hist$ in the tree. If $\hist$ has height 0, then $\Actions(\hist) = \emptyset$, and $\Var[\term^\time]{\baseutil{\hist}{\strategy^\time}{\term^\time} | \term^\time \sqsupseteq \hist} = 0$. Otherwise, we begin from Lemma~\ref{lem:recvar} and apply the inductive hypothesis for $\hist'$ with height less than that of $\hist$. This gives
\begin{align*}
&\hspace{-20pt}\Var[\term^\time]{\baseutil{\hist}{\strategy^\time}{\term^\time} | \term^\time \sqsupseteq \hist}\\ 
&= \sum_{\act \in \Actions(\hist)} \frac{\left(\strategy^\time(\hist, \act)\right)^2}{\sampling{\time}(\hist, \act)} \Var[\term^\time]{\baseutil{(\hist\act)}{\strategy^\time}{\term^\time}} + \Var[\act]{\frac{\strategy^\time(\hist, \act)}{\sampling{\time}(\hist, \act)} \left(\exputil{(\hist\act)}{\strategy^\time} - \baseline^\time(\hist, \act) \right)}\\
&= \sum_{\act \in \Actions(\hist)} \frac{\left(\strategy^\time(\hist, \act)\right)^2}{\sampling{\time}(\hist, \act)} \sum_{\hist' \sqsupseteq (\hist\act)} \frac{(\reach{\strategy^\time}((\hist\act), \hist'))^2}{\reach{\sampling{\time}}((\hist\act), \hist')} \Var[\act']{\frac{\strategy^\time(\hist', \act')}{\sampling{\time}(\hist', \act')} \left(\exputil{(\hist'\act')}{\strategy^\time} - \baseline^\time(\hist', \act') \right)}\\
&\qquad + \Var[\act]{\frac{\strategy^\time(\hist, \act)}{\sampling{\time}(\hist, \act)} \left(\exputil{(\hist\act)}{\strategy^\time} - \baseline^\time(\hist, \act) \right)}\\
&= \sum_{\act \in \Actions(\hist)} \sum_{\hist' \sqsupseteq (\hist\act)} \frac{(\reach{\strategy^\time}(\hist, \hist'))^2}{\reach{\sampling{\time}}(\hist, \hist')} \Var[\act']{\frac{\strategy^\time(\hist', \act')}{\sampling{\time}(\hist', \act')} \left(\exputil{(\hist'\act')}{\strategy^\time} - \baseline^\time(\hist', \act') \right)}\\
&\qquad + \Var[\act]{\frac{\strategy^\time(\hist, \act)}{\sampling{\time}(\hist, \act)} \left(\exputil{(\hist\act)}{\strategy^\time} - \baseline^\time(\hist, \act) \right)}\\
&= \sum_{\hist' \sqsupseteq \hist} \frac{(\reach{\strategy^\time}(\hist, \hist'))^2}{\reach{\sampling{\time}}(\hist, \hist')} \Var[\act']{\frac{\strategy^\time(\hist', \act')}{\sampling{\time}(\hist', \act')} \left(\exputil{(\hist'\act')}{\strategy^\time} - \baseline^\time(\hist', \act') \right)}
\end{align*}
The lemma follows by induction.
\end{proof}

\begin{proof}[Proof of Theorem~\ref{thm:fullvar}]
Starting from Lemma~\ref{lem:fullvar}, we first bound the variance of history values
\begin{align}
&\hspace{-20pt}\Var[\term^\time]{\baseutil{\hist}{\strategy^\time}{\term^\time} | \term^\time \sqsupseteq \hist}\nonumber\\
&= \sum_{\hist' \sqsupseteq \hist} \frac{(\reach{\strategy^\time}(\hist, \hist'))^2}{\reach{\sampling{\time}}(\hist, \hist')} \Var[\act']{\frac{\strategy^\time(\hist', \act')}{\sampling{\time}(\hist', \act')} \left(\exputil{(\hist'\act')}{\strategy^\time} - \baseline^\time(\hist', \act') \right)}\nonumber\\
&\leq \sum_{\hist' \sqsupseteq \hist} \frac{(\reach{\strategy^\time}(\hist, \hist'))^2}{\reach{\sampling{\time}}(\hist, \hist')} \Exp[\act']{\left(\frac{\strategy^\time(\hist', \act')}{\sampling{\time}(\hist', \act')} \left(\exputil{(\hist'\act')}{\strategy^\time} - \baseline^\time(\hist', \act') \right)\right)^2}\nonumber\\
&= \sum_{\hist' \sqsupseteq \hist} \frac{(\reach{\strategy^\time}(\hist, \hist'))^2}{\reach{\sampling{\time}}(\hist, \hist')} \sum_{\act' \in \Actions(\hist')} \frac{\left(\strategy^\time(\hist', \act')\right)^2}{\sampling{\time}(\hist', \act')} \left(\exputil{(\hist'\act')}{\strategy^\time} - \baseline^\time(\hist', \act') \right)^2\nonumber\\
&= \sum_{\substack{\hist' \sqsupseteq \hist\\ \act' \in \Actions(\hist')}} \frac{(\reach{\strategy^\time}(\hist, (\hist'\act')))^2}{\reach{\sampling{\time}}(\hist, (\hist'\act'))} \left(\exputil{(\hist'\act')}{\strategy^\time} - \baseline^\time(\hist', \act') \right)^2\label{eq:thm4eq1}
\end{align}

We then reformulate the variance of the history action value $\baseutil{\hist, \act}{\strategy^\time}{\term^\time}$ in terms of the variance of the succeeding history value $\baseutil{(\hist\act)}{\strategy^\time}{\term^\time}$. To do this, we apply the law of total variance conditioning on the random variable $\indic((\hist\act) \sqsubseteq \term^\time)$ which indicates whether $\act$ is sampled at $\hist$.
\begin{align}
&\hspace{-20pt}\Var[\term^\time]{\baseutil{\hist, \act}{\strategy^\time}{\term^\time} \middle| \term^\time \sqsupseteq \hist}\nonumber\\
&= \Var[\term^\time]{\frac{\indic((\hist\act) \sqsubseteq \term^\time)}{\sampling{\time}(\hist, \act)} \left(\baseutil{(\hist\act)}{\strategy^\time}{\term^\time} - \baseline^\time(\hist, \act)\right) + \baseline^\time(\hist, \act) \middle| \term^\time \sqsupseteq \hist}\nonumber\\
&= \Var[\term^\time]{\frac{\indic((\hist\act) \sqsubseteq \term^\time)}{\sampling{\time}(\hist, \act)} \left(\baseutil{(\hist\act)}{\strategy^\time}{\term^\time} - \baseline^\time(\hist, \act)\right) \middle| \term^\time \sqsupseteq \hist}\nonumber\\
&= \Exp{\Var[\term^\time]{\frac{\indic((\hist\act) \sqsubseteq \term^\time)}{\sampling{\time}(\hist, \act)} \left(\baseutil{(\hist\act)}{\strategy^\time}{\term^\time} - \baseline^\time(\hist, \act)\right) \middle| \indic((\hist\act) \sqsubseteq \term^\time)}}\nonumber\\
&\qquad + \Var{\Exp[\term^\time]{\frac{\indic((\hist\act) \sqsubseteq \term^\time)}{\sampling{\time}(\hist, \act)} \left(\baseutil{(\hist\act)}{\strategy^\time}{\term^\time} - \baseline^\time(\hist, \act)\right) \middle| \indic((\hist\act) \sqsubseteq \term^\time)}}\nonumber\\
&= \Exp{\frac{\indic((\hist\act) \sqsubseteq \term^\time)}{(\sampling{\time}(\hist, \act))^2}\Var[\term^\time]{\baseutil{(\hist\act)}{\strategy^\time}{\term^\time} \middle| \indic((\hist\act) \sqsubseteq \term^\time)}}\nonumber\\
&\qquad + \Var{\frac{\indic((\hist\act) \sqsubseteq \term^\time)}{\sampling{\time}(\hist, \act)} \left(\exputil{(\hist\act)}{\strategy^\time} - \baseline^\time(\hist, \act)\right)}\nonumber\\
&= \frac{1}{\sampling{\time}(\hist, \act)}\Var[\term^\time]{\baseutil{(\hist\act)}{\strategy^\time}{\term^\time} \middle| \term^\time \sqsupseteq (\hist\act)}\nonumber\\
&\qquad + \frac{1}{(\sampling{\time}(\hist, \act))^2}\left(\exputil{(\hist\act)}{\strategy^\time} - \baseline^\time(\hist, \act)\right)^2\Var{\indic((\hist\act) \sqsubseteq \term^\time)}\nonumber\\
&= \frac{1}{\sampling{\time}(\hist, \act)}\Var[\term^\time]{\baseutil{(\hist\act)}{\strategy^\time}{\term^\time} \middle| \term^\time \sqsupseteq (\hist\act)} + \frac{1 - \sampling{\time}(\hist, \act)}{\sampling{\time}(\hist, \act)}\left(\exputil{(\hist\act)}{\strategy^\time} - \baseline^\time(\hist, \act)\right)^2\nonumber\\
&\leq \frac{1}{\sampling{\time}(\hist, \act)}\left(\Var[\term^\time]{\baseutil{(\hist\act)}{\strategy^\time}{\term^\time} \middle| \term^\time \sqsupseteq (\hist\act)} + \left(\exputil{(\hist\act)}{\strategy^\time} - \baseline^\time(\hist, \act)\right)^2\right)\label{eq:thm4eq2}
\end{align}

Combining (\ref{eq:thm4eq1}) and (\ref{eq:thm4eq2}), we get
\begin{align*}
&\hspace{-20pt}\Var[\term^\time]{\baseutil{\hist, \act}{\strategy^\time}{\term^\time} \middle| \term^\time \sqsupseteq \hist}\\
&\leq \frac{1}{\sampling{\time}(\hist, \act)}\biggl(\sum_{\substack{\hist' \sqsupseteq (\hist\act)\\ \act' \in \Actions(\hist')}} \frac{(\reach{\strategy^\time}((\hist\act), (\hist'\act')))^2}{\reach{\sampling{\time}}((\hist\act), (\hist'\act'))} \left(\exputil{(\hist'\act')}{\strategy^\time} - \baseline^\time(\hist', \act') \right)^2\\
&\hspace{2cm} + \left(\exputil{(\hist\act)}{\strategy^\time} - \baseline^\time(\hist, \act)\right)^2\biggr)\\
&= \frac{1}{\sampling{\time}(\hist, \act)}\sum_{(\hist'\act') \sqsupseteq (\hist\act)} \frac{(\reach{\strategy^\time}((\hist\act), (\hist'\act')))^2}{\reach{\sampling{\time}}((\hist\act), (\hist'\act'))} \left(\exputil{(\hist'\act')}{\strategy^\time} - \baseline^\time(\hist', \act') \right)^2\\
&= \sum_{(\hist'\act') \sqsupseteq (\hist\act)} \frac{(\reach{\strategy^\time}((\hist\act), (\hist'\act')))^2}{\reach{\sampling{\time}}(\hist, (\hist'\act'))} \left(\exputil{(\hist'\act')}{\strategy^\time} - \baseline^\time(\hist', \act') \right)^2
\end{align*}

\end{proof}

\endgroup
\section{Public trees}
\label{sec:pubtrees}

There are multiple sources of variance when computing the regret at an information set in MCCFR. One form of variance comes from sampling actions (and recursively, trajectories) from the information set, rather than walking the full subtree. A second form of variance comes from sampling only one of the histories in the information set itself. Our baseline framework reduces the first kind of variance, but does not take the second form of variance into account.

One approach to combating this single-history variance could be to extend the use of the baseline; analogous to how we created a control variate from using $\baseline^\time(\hist, \act)$ to evaluate unsampled actions $\act$, we could also create a control variate that uses $\baseline^\time(\hist', \act)$ to evaluate all unsampled $\hist' \in \Iset(\hist)$. However, this requires evaluating alternate histories along every step of the sampled trajectory, meaning that a single iteration of MCCFR goes from complexity $\BigO{d|\Actions_{\text{max}}|}$ to $\BigO{d|\Actions_{\text{max}}||\Iset_{\text{max}}}|$.

A second approach, and the one we present in this section, is to change the sampling method used. Rather than using a baseline to consider each alternate history in the information set, we directly evaluate all such histories. Intuitively, this can be done by only sampling actions that are publicly observable, and walking all actions that change the game's hidden state. This approach was used by \citet{Schmid19}, but was never formalized. We formalize the algorithm here, after presenting some additional assumptions and definitions.

We assume that the EFG is \emph{timeable} \citep{Jakobsen16}, which informally means that no player can gain additional information by tracking how much time elapses while they are not acting. Formally, this means that we can assign a value $\timefunc{\hist}$ to every $\hist \in \Histories$ such that $\timefunc{\hist} = \timefunc{\hist'}$ for any $\hist' \in \Iset_{\Toact(\hist)}(\hist)$, and $\timefunc{\hist} < \timefunc{\hist'}$ for any $\hist' \sqsupseteq \hist$. Every game played by humans must be timeable, or else the human could distinguish histories in the same information set by tracking elapsed time. If a game is timeable, players always observe the timing when they are acting, so there must be some strategically identical game where they observe the timing even when not acting. Thus we will assume that our games satisfy this requirement.

We now introduce the concept of a \emph{public state} \citep{Johanson11}, which groups histories based on information available to all players, or informally, based on whether they distinguishable to an outside observer. Formally, a public state is a set of histories that is (minimally) closed under the information set relation for all players. Let $\Pubstates$ be the set of public states (which partitions $\Histories$), and $\Pubstate(\hist) \in \Pubstates$ be the public state that $\hist$ belongs to. By assumption that all players observe the game's timing, necessarily $\timefunc{\hist} = \timefunc{\hist'}$ if $\Pubstate(\hist) = \Pubstate(\hist')$. In turn, this means that if $\hist \sqsubseteq \hist'$, then $\Pubstate(\hist) \neq \Pubstate(\hist')$. We also assume for simplicity that if $\Pubstate(\hist) = \Pubstate(\hist')$, then $\Toact(\hist) = \Toact(\hist')$. If necessary, this can be made true for any timeable game by splitting information sets and adding dummy actions, without strategically changing the game.

We define $\Transitions{\Pubstate}$ to be the set of successor public states to $\Pubstates$: $\Pubstate' \in \Transitions{\Pubstate}$ if there is some $\hist \in \Pubstate$, $\act \in \Actions(\hist)$, and $\hist' \in \Pubstate'$ such that $(\hist\act) = \hist'$. The successor relation defines the edges of a \emph{public tree}, where the public states are nodes. It should be noted that more than one action can lead to the same successor public state when some player doesn't observe the action, and that one action can lead to more than one successor public state if some previously private information becomes public.

In the statement of Theorem~\ref{thm:zerovar}, we used $\text{samp}^\time(\hist)$ to notate whether $\hist$ was sampled on iteration $\time$. With the notation introduced here, we can formalize this by defining $\text{samp}^\time(\hist)$ to occur if and only if $\hist' \sqsubseteq \term^\time$ for some $\hist' \in \Pubstate(\hist)$ and some $\term^\time \in \Terminals^\time$. For clarity, we thus symbolize this relation as $\Pubstate(\hist) \sqsubseteq \Terminals^\time$.

\subsection{Public Outcome Sampling}

We now define our MCCFR variant, which we call \emph{Public Outcome Sampling (POS)}. Instead of walking trajectories through the EFG tree by sampling actions, POS walks trajectories through the public tree by sampling successor public states.

For public state $\Pubstate$, let $\my{\Isets}(\Pubstate) \subseteq \my{\Isets}$ be the collection of player $\player$ information sets contained within $\Pubstate$. While walking down the tree, POS keeps track of reach probabilities $\my{\reach{\strategy^\time}}(\my{\Iset})$ for each $\my{\Iset} \in \my{\Isets}(\Pubstate)$ and each player $\player$ at public state $\Pubstate$. To recurse, it samples some successor $\Pubstate' \in \Transitions{\Pubstate}$ using a probability distribution $\sampling{\time}(\Pubstate) \in \Delta_{\Transitions{\Pubstate}}$. It updates the reach probabilities to $\my{\reach{\strategy^\time}}(\my{\Iset})$ for each $\my{\Iset} \in \my{\Isets}(\Pubstate')$, using the current strategy $\strategy^\time$. Ultimately, the recursion reaches a public state which only contains terminal nodes (as the end of the game is publicly observable). This public state, which defines the sampled trajectory in the public tree, we label $\Terminals^\time$. The terminal histories are evaluated as $\utility(\term)$ for each $\term \in \Terminals^\time$.

Walking back up the tree, at each recursion step we pass back the utilities $\baseutil{\hist'}{\strategy^\time}{\Terminals^\time}$ for each $\hist' \in \Pubstate'$. From these, we apply a baseline and recursively calculate utilities as
\begin{align}
\baseutil{\hist, \act}{\strategy^\time}{\Terminals^\time} &= \frac{\indic(\Pubstate((\hist\act)) \sqsubseteq \Terminals^\time)}{\sampling{\time}(\Pubstate(\hist), \Pubstate((\hist\act))}\left(\baseutil{(\hist\act)}{\strategy^\time}{\Terminals^\time} - \baseline^\time(\hist, \act)\right) + \baseline^\time(\hist, \act)\\
\baseutil{\hist}{\strategy^\time}{\Terminals^\time} &= \sum_{\act \in \Actions(\hist)} \strategy^\time(\hist, \act) \baseutil{\hist, \act}{\strategy^\time}{\Terminals^\time}
\end{align}
for each $\hist \in \Pubstate$ and $\act \in \Actions(\hist)$. We then use these values to calculate regrets $\regret{\time}(\Iset, \act)$ for each $\Iset \in \my{\Isets}(\Pubstate)$ and update the saved regrets.

Algorithm~\ref{alg:posmccfr} gives pseudocode for MCCFR with POS. Algorithm~\ref{alg:predposmccfr} gives pseudocode for a version using the predictive baseline.

\begin{algorithm}[htb]
\caption{MCCFR w/ POS and baseline}
\label{alg:posmccfr}
\begin{algorithmic}[1]
\Statex
\Function{POS-MCCFR}{$\Pubstate$}
    \Iif{$\Pubstate \subseteq \Terminals$}{\Return $\{\utility(\hist) \mid \forall \hist \in \Pubstate\}$}
    \For {$\Iset \in \Isets_{\Toact(\Pubstate)}(\Pubstate)$}
        \State $\strategy^\time(\Iset, \cdot) \gets \Call{RegretMatching}{\Regret{\time-1}(\Iset, \cdot)}$
        \State $\avgstrat^\time(\Iset, \cdot) \gets \frac{\time-1}{\time} \avgstrat^{\time - 1}(\Iset, \cdot) + \frac{1}{\time} \strategy^\time$
    \EndFor
    \State sample successor $\Pubstate' \sim \sampling{\time}(\Pubstate, \cdot)$
    \State $\{\baseutil{\hist'}{\strategy^\time}{\Terminals^\time} \mid \forall \hist' \in \Pubstate'\} \gets \Call{POS-MCCFR}{\Pubstate'}$
    \For {$\Iset \in \Isets_{\Toact(\Pubstate)}$}
        \For {$\hist \in \Iset$}
            \For {$\act \in \Actions(\hist)$}
                \If {$(\hist\act) \in \Pubstate'$}
                    \State $\baseutil{\hist, \act}{\strategy^\time}{\Terminals^\time} \gets \baseline^\time(\hist, \act) + \frac{1}{\sampling{\time}(\Pubstate, \Pubstate')}\left(\baseutil{(\hist\act)}{\strategy^\time}{\Terminals^\time} - \baseline^\time(\hist, \act)\right)$
                    \State $\baseline^{\time+1}(\hist,\act) \gets \Call{UpdateBaseline}{\baseline^\time(\hist,\act), \baseutil{(\hist\act)}{\strategy^\time}{\Terminals^\time}}$
                \Else
                    \State $\baseutil{\hist, \act}{\strategy^\time}{\Terminals^\time} \gets \baseline^\time(\hist, \act)$
                \EndIf
            \EndFor
            \State $\baseutil{\hist}{\strategy^\time}{\Terminals^\time} \gets \sum_{\act'} \strategy^\time(\hist, \act') \baseutil{\hist, \act'}{\strategy^\time}{\Terminals^\time}$
        \EndFor
        \If {$\Toact(\hist) = 1$}
            \State $\regret{\time}(\Iset, \act) \gets \frac{1}{\reach{\sampling{\time}}(\Pubstate)}\sum_{\hist \in \Iset}\reach{\strategy^\time}_2(\hist)\left(\baseutil{\hist, \cdot}{\strategy^\time}{\Terminals^\time} - \baseutil{\hist}{\strategy^\time}{\Terminals^\time}\right)$
        \ElsIf {$\Toact(\hist) = 2$}
            \State $\regret{\time}(\Iset, \act) \gets \frac{1}{\reach{\sampling{\time}}(\Pubstate)}\sum_{\hist \in \Iset}\reach{\strategy^\time}_1(\hist)\left({-\baseutil{\hist, \cdot}{\strategy^\time}{\Terminals^\time}} + \baseutil{\hist}{\strategy^\time}{\Terminals^\time}\right)$
        \EndIf
        \State $\Regret{\time}(\Iset, \cdot) \gets \Regret{\time-1}(\Iset, \cdot) + \regret{\time}(\Iset, \cdot)$
    \EndFor
    \State \Return $\{\baseutil{\hist}{\strategy^\time}{\Terminals^\time} \mid \forall \hist \in \Pubstate\}$
\EndFunction
\end{algorithmic}
\end{algorithm}

\begin{algorithm}[htb]
\caption{MCCFR w/ POS and predictive baseline}
\label{alg:predposmccfr}
\begin{algorithmic}[1]
\Statex
\Function{POS-MCCFR}{$\Pubstate$}
    \Iif{$\Pubstate \subseteq \Terminals$}{\Return $\{\utility(\hist), \utility(\hist) \mid \forall \hist \in \Pubstate\}$}
    \For {$\Iset \in \Isets_{\Toact(\Pubstate)}(\Pubstate)$}
        \State $\strategy^\time(\Iset, \cdot) \gets \Call{RegretMatching}{\Regret{\time-1}(\Iset, \cdot)}$
        \State $\avgstrat^\time(\Iset, \cdot) \gets \frac{\time-1}{\time} \avgstrat^{\time - 1}(\Iset, \cdot) + \frac{1}{\time} \strategy^\time$
    \EndFor
    \State sample successor $\Pubstate' \sim \sampling{\time}(\Pubstate, \cdot)$
    \State $\{\baseutil{\hist'}{\strategy^\time}{\Terminals^\time}, \baseutil{\hist'}{\strategy^{\time+1}}{\Terminals^\time} \mid \forall \hist' \in \Pubstate'\} \gets \Call{POS-MCCFR}{\Pubstate'}$
    \For {$\Iset \in \Isets_{\Toact(\Pubstate)}$}
        \For {$\hist \in \Iset$}
            \For {$\act \in \Actions(\hist)$}
                \If {$(\hist\act) \in \Pubstate'$}
                    \State $\baseutil{\hist, \act}{\strategy^\time}{\Terminals^\time} \gets \baseline^\time(\hist, \act) + \frac{1}{\sampling{\time}(\Pubstate, \Pubstate')}\left(\baseutil{(\hist\act)}{\strategy^\time}{\Terminals^\time} - \baseline^\time(\hist, \act)\right)$
                    \State $\baseline^{\time+1}(\hist,\act) \gets \baseutil{(\hist\act)}{\strategy^{\time+1}}{\Terminals^\time}$
                    \State $\baseutil{\hist, \act}{\strategy^{\time+1}}{\Terminals^\time} \gets \baseline^{\time+1}(\hist,\act)$
                \Else
                    \State $\baseutil{\hist, \act}{\strategy^\time}{\Terminals^\time} \gets \baseline^\time(\hist, \act)$
                    \State $\baseutil{\hist, \act}{\strategy^{\time+1}}{\Terminals^\time} \gets \baseline^\time(\hist, \act)$
                \EndIf
            \EndFor
            \State $\baseutil{\hist}{\strategy^\time}{\Terminals^\time} \gets \sum_{\act'} \strategy^\time(\hist, \act') \baseutil{\hist, \act'}{\strategy^\time}{\Terminals^\time}$
        \EndFor
        \If {$\Toact(\hist) = 1$}
            \State $\regret{\time}(\Iset, \act) \gets \frac{1}{\reach{\sampling{\time}}(\Pubstate)}\sum_{\hist \in \Iset}\reach{\strategy^\time}_2(\hist)\left(\baseutil{\hist, \cdot}{\strategy^\time}{\Terminals^\time} - \baseutil{\hist}{\strategy^\time}{\Terminals^\time}\right)$
        \ElsIf {$\Toact(\hist) = 2$}
            \State $\regret{\time}(\Iset, \act) \gets \frac{1}{\reach{\sampling{\time}}(\Pubstate)}\sum_{\hist \in \Iset}\reach{\strategy^\time}_1(\hist)\left({-\baseutil{\hist, \cdot}{\strategy^\time}{\Terminals^\time}} + \baseutil{\hist}{\strategy^\time}{\Terminals^\time}\right)$
        \EndIf
        \State $\Regret{\time}(\Iset, \cdot) \gets \Regret{\time-1}(\Iset, \cdot) + \regret{\time}(\Iset, \cdot)$
        \State $\strategy^{\time+1}(\Iset, \cdot) \gets \Call{RegretMatching}{\Regret{\time}(\Iset, \cdot)}$
        \For {$\hist \in \Iset$}
            \State $\baseutil{\hist}{\strategy^{\time+1}}{\Terminals^\time} \gets \sum_{\act'} \strategy^{\time+1}(\hist, \act') \baseutil{\hist, \act'}{\strategy^{\time+1}}{\Terminals^\time}$
        \EndFor
    \EndFor
    \State \Return $\{\baseutil{\hist}{\strategy^\time}{\Terminals^\time}, \baseutil{\hist}{\strategy^{\time+1}}{\Terminals^\time} \mid \forall \hist \in \Pubstate\}$
\EndFunction
\end{algorithmic}
\end{algorithm}

Updating a public state $\Pubstate$ with this algorithm requires walking through all of the possible histories in the public state, as well as all of the actions possible at each history, giving a complexity $\BigO{|\Pubstate||\Actions_{\text{max}}|}$, or equivalently $\BigO{|\my{\Isets}(\Pubstate)||\opp{\Isets}(\Pubstate)||\Actions_{\text{max}}|}$. However, the computations for each information set $\Iset \in \my{\Isets}(\Pubstate)$ with acting player $\player$ can be done completely independently, allowing for easy parallelization (e.g. on a GPU) to achieve complexity $\BigO{|\opp{\Isets}(\Pubstate)||\Actions_{\text{max}}|}$. This approach was taken with the non-sampling algorithm used in DeepStack \citep{Moravcik17}.

\section{Proof of Theorem~\ref{thm:zerovar}}
\label{sec:zerovar}

We introduce the following definition that tracks sampled values of terminal histories:
\begin{equation}
\partutil{\time}{\term} = \begin{cases} \utility(\term) \qquad &\text{if $\term \in \Terminals^\tau$ for any $\tau < \time$}\\ 0 &\text{otherwise} \end{cases}
\end{equation}

We begin by showing that the if our baseline function maintains is a weighted sum of values $\partutil{\time}{\term}$, then our predictive baseline-corrected value estimates will be as well.

\begin{lem}
Let $\Pubstate((\hist\act)) \sqsubseteq \Terminals^\time$, and assume that for all $(\hist'\act') \sqsupseteq (\hist\act)$, the predictive baseline function satisfies $\baseline^\time(\hist', \act') = \sum_{\term \in \Terminals[(\hist'\act')]} \reach{\strategy^\time}((\hist'\act'), \term) \partutil{\time}{\term}$. Then the baseline-corrected utility satisfies
\begin{equation}
\baseutil{\hist, \act}{\strategy^{\time+1}}{\Terminals^\time} = \sum_{\term \in \Terminals[(\hist\act)]} \reach{\strategy^{\time+1}}((\hist\act), \term) \partutil{\time+1}{\term}
\end{equation}
\label{lem:predupdate}
\end{lem}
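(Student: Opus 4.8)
The plan is to prove this by induction on the height of $(\hist\act)$ in the game tree, unfolding the recursive definition of the predictive baseline-corrected value one level at a time and matching it against the claimed weighted sum over the terminals in $\Terminals[(\hist\act)]$. Throughout, I will repeatedly use that the hypothesis $\Pubstate((\hist\act)) \sqsubseteq \Terminals^\time$ puts us in the sampled branch of Algorithm~\ref{alg:predposmccfr}, where the algorithm sets $\baseutil{\hist, \act}{\strategy^{\time+1}}{\Terminals^\time} = \baseline^{\time+1}(\hist, \act) = \baseutil{(\hist\act)}{\strategy^{\time+1}}{\Terminals^\time}$, so that the action value equals the succeeding history value computed under the next strategy.

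For the base case, suppose $(\hist\act) \in \Terminals$. Since $\Pubstate((\hist\act)) \sqsubseteq \Terminals^\time$ the history is sampled, so $(\hist\act) \in \Terminals^\time$ and hence $\partutil{\time+1}{(\hist\act)} = \utility((\hist\act))$. The algorithm returns $\baseutil{(\hist\act)}{\strategy^{\time+1}}{\Terminals^\time} = \utility((\hist\act))$, while the right-hand side collapses to $\reach{\strategy^{\time+1}}((\hist\act),(\hist\act))\,\partutil{\time+1}{(\hist\act)} = \utility((\hist\act))$, so the two agree. For the inductive step I would expand $\baseutil{\hist, \act}{\strategy^{\time+1}}{\Terminals^\time} = \baseutil{(\hist\act)}{\strategy^{\time+1}}{\Terminals^\time} = \sum_{\act'} \strategy^{\time+1}((\hist\act), \act')\, \baseutil{(\hist\act), \act'}{\strategy^{\time+1}}{\Terminals^\time}$ and treat each child value by whether $(\hist\act\act')$ was sampled. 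If $\Pubstate((\hist\act\act')) \sqsubseteq \Terminals^\time$, the inductive hypothesis applied to the pair $((\hist\act), \act')$ (whose node has smaller height) gives the child value as $\sum_{\term \in \Terminals[(\hist\act\act')]} \reach{\strategy^{\time+1}}((\hist\act\act'), \term)\,\partutil{\time+1}{\term}$. If $(\hist\act\act')$ is unsampled, the algorithm leaves the child value equal to $\baseline^\time((\hist\act), \act')$, which by the lemma's assumption equals $\sum_{\term \in \Terminals[(\hist\act\act')]} \reach{\strategy^\time}((\hist\act\act'), \term)\,\partutil{\time}{\term}$.

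The main obstacle is reconciling the two timestamps in the unsampled case: the hypothesis delivers quantities at time $\time$ (through $\baseline^\time$ and $\partutil{\time}{}$), whereas the target is stated at time $\time+1$. The key fact I would establish is that if $\Pubstate((\hist\act\act'))$ is not on the sampled public trajectory, then neither is any of its descendant public states, since the sampled trajectory is a single path in the public tree; consequently no information set in that subtree has its regrets updated, so $\strategy^{\time+1}$ coincides with $\strategy^\time$ throughout, giving $\reach{\strategy^{\time+1}}((\hist\act\act'), \term) = \reach{\strategy^\time}((\hist\act\act'), \term)$ for every $\term \in \Terminals[(\hist\act\act')]$. (Here I rely on public states being closed under the information-set relation, so each information set lies entirely within one public state.) Simultaneously, no terminal of $\Terminals[(\hist\act\act')]$ lies in $\Terminals^\time$, whence $\partutil{\time+1}{\term} = \partutil{\time}{\term}$ there. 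These two observations rewrite the unsampled child's time-$\time$ weighted sum as the identical time-$(\time+1)$ weighted sum, so both cases yield $\baseutil{(\hist\act), \act'}{\strategy^{\time+1}}{\Terminals^\time} = \sum_{\term \in \Terminals[(\hist\act\act')]} \reach{\strategy^{\time+1}}((\hist\act\act'), \term)\,\partutil{\time+1}{\term}$.

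Finally I would aggregate over actions. Using $\strategy^{\time+1}((\hist\act), \act')\,\reach{\strategy^{\time+1}}((\hist\act\act'), \term) = \reach{\strategy^{\time+1}}((\hist\act), \term)$ and the fact that $\Terminals[(\hist\act)]$ is the disjoint union over $\act'$ of the sets $\Terminals[(\hist\act\act')]$, the weighted sum over children combines into $\baseutil{\hist, \act}{\strategy^{\time+1}}{\Terminals^\time} = \sum_{\term \in \Terminals[(\hist\act)]} \reach{\strategy^{\time+1}}((\hist\act), \term)\,\partutil{\time+1}{\term}$, completing the induction.
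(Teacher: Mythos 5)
Your proof is correct and follows essentially the same route as the paper's: induction on the height of $(\hist\act)$, using the predictive-baseline assignment $\baseline^{\time+1}(\hist,\act) = \baseutil{(\hist\act)}{\strategy^{\time+1}}{\Terminals^\time}$ to collapse the control-variate term, then unfolding over child actions and recombining via $\strategy^{\time+1}((\hist\act),\act')\,\reach{\strategy^{\time+1}}((\hist\act\act'),\term) = \reach{\strategy^{\time+1}}((\hist\act),\term)$ and the partition of $\Terminals[(\hist\act)]$. The one substantive difference is that you split the children into sampled and unsampled and, for an unsampled $(\hist\act\act')$, explicitly argue that $\strategy^{\time+1}$ agrees with $\strategy^\time$ and $\partutil{\time+1}{\cdot}$ agrees with $\partutil{\time}{\cdot}$ throughout that subtree, so the time-$\time$ weighted sum delivered by the lemma's assumption can be promoted to time $\time+1$. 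The paper's proof instead invokes ``the inductive hypothesis'' uniformly for all $\act'$, which read literally does not cover unsampled children (the lemma's conclusion is only asserted under $\Pubstate((\hist\act\act')) \sqsubseteq \Terminals^\time$); the paper only supplies the off-path argument later, in the unsampled case of Lemma~\ref{lem:predinvariant}. So your version is slightly more careful at exactly the point where the paper is terse, and the extra argument you use is the same one the paper itself relies on elsewhere.
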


\begin{proof}
We prove this by induction on the height of $(\hist\act)$ in the tree. Our base case is that $(\hist\act) = \term^\time$ for some $\term^\time \in \Terminals^\time$, in which case
\begin{align*}
&\hspace{-15pt}\baseutil{\hist, \act}{\strategy^{\time+1}}{\Terminals^\time}\\
&= \frac{1}{\sampling{\time}(\Pubstate(\hist),\Pubstate((\hist\act))}(\utility(\term^\time) - \baseline^{\time+1}(\hist, \act)) + \baseline^{\time+1}(\hist, \act)\\
&= \frac{1}{\sampling{\time}(\Pubstate(\hist),\Pubstate((\hist\act))}(\utility(\term^\time) - \utility(\term^\time)) + \utility(\term^\time) &&\text{by predictive baseline definition}\\
&= \utility(\term^\time)\\
&= \partutil{\time+1}{(\hist\act)} &&\text{as $(\hist\act)=\term^\time \in \Terminals^\time$}
\end{align*}

For the inductive step, we consider some $(\hist\act)$ and assume the hypothesis holds for all $(\hist'\act')$ with smaller height.
\begin{align*}
&\baseutil{\hist, \act}{\strategy^{\time+1}}{\Terminals^\time}\\
&= \frac{1}{\sampling{\time}(\Pubstate(\hist),\Pubstate((\hist\act))}\left(\baseutil{(\hist\act)}{\strategy^{\time+1}}{\Terminals^\time} - \baseline^{\time+1}(\hist, \act)\right) + \baseline^{\time+1}(\hist, \act)\\
&= \baseutil{(\hist\act)}{\strategy^{\time+1}}{\Terminals^\time} &&\hspace{-19pt}\text{by predictive baseline definition}\\
&= \sum_{\act' \in \Actions((\hist\act))} \strategy^{\time+1}((\hist\act), \act') \baseutil{(\hist\act), \act'}{\strategy^{\time+1}}{\Terminals^\time}\\
&= \sum_{\act' \in \Actions((\hist\act))} \strategy^{\time+1}((\hist\act), \act') \sum_{\term \in \Terminals[(\hist\act\act')]} \reach{\strategy^{\time+1}}((\hist\act\act'), \term) \partutil{\time+1}{\term} &&\text{by inductive hypothesis}\\
&= \sum_{\act' \in \Actions((\hist\act))} \sum_{\term \in \Terminals[(\hist\act\act')]} \reach{\strategy^{\time+1}}((\hist\act), \term) \partutil{\time+1}{\term}\\
&= \sum_{\term \in \Terminals[(\hist\act)]} \reach{\strategy^{\time+1}}((\hist\act), \term) \partutil{\time+1}{\term}
\end{align*}
In the last step we use that $\Terminals[(\hist\act)]$ is partitioned into the sets $\Terminals[(\hist\act\act')]$ by which action $\act' \in \Actions((\hist\act))$ follows $(\hist\act)$. The lemma follows by induction.
\end{proof}

Next, we show that the predictive baseline update maintains an invariant that the baseline values are a weighted sum of values $\partutil{\time}{\term}$.

\begin{lem}
For any time step $\time$, the predictive baseline satisfies
\begin{equation}
\baseline^{\time}(\hist, \act) = \sum_{\term \in \Terminals[(\hist\act)]} \reach{\strategy^{\time}}((\hist\act), \term) \partutil{\time}{\term}
\end{equation}
\label{lem:predinvariant}
\end{lem}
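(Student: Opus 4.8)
The plan is to prove this invariant by induction on the time step $\time$, leveraging Lemma~\ref{lem:predupdate} to handle the histories touched by the current sample and a structural argument about the public tree to handle those that are not. For the base case, before any iteration has run the baseline is initialized to $0$; since no terminal has yet been sampled, $\partutil{1}{\term} = 0$ for every $\term$, so the right-hand side is also $0$ and the two agree.

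For the inductive step, I would assume the invariant holds at time $\time$, that is $\baseline^{\time}(\hist', \act') = \sum_{\term \in \Terminals[(\hist'\act')]} \reach{\strategy^{\time}}((\hist'\act'), \term) \partutil{\time}{\term}$ for every $(\hist'\act')$, and split on whether $(\hist\act)$ lies on the sampled public trajectory. If $\Pubstate((\hist\act)) \sqsubseteq \Terminals^\time$, then the predictive baseline update sets $\baseline^{\time+1}(\hist, \act) = \baseutil{\hist, \act}{\strategy^{\time+1}}{\Terminals^\time}$. The inductive hypothesis, applied to every $(\hist'\act') \sqsupseteq (\hist\act)$, is exactly the premise required by Lemma~\ref{lem:predupdate}, so that lemma immediately yields $\baseutil{\hist, \act}{\strategy^{\time+1}}{\Terminals^\time} = \sum_{\term \in \Terminals[(\hist\act)]} \reach{\strategy^{\time+1}}((\hist\act), \term) \partutil{\time+1}{\term}$, which is the claim.

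The remaining case, $\Pubstate((\hist\act)) \not\sqsubseteq \Terminals^\time$, is where the work lies and is the main obstacle. Here the baseline is left unchanged, $\baseline^{\time+1}(\hist, \act) = \baseline^\time(\hist, \act)$, so I must argue that the right-hand side of the invariant also does not change between $\time$ and $\time+1$. Two quantities could a priori change: the sampled-terminal values $\partutil{}{}$ and the reach weights $\reach{\strategy}$. For the first, I would use that the sampled public trajectory is a path in the public tree, so if $\Pubstate((\hist\act))$ is not on it then neither is any descendant public state; hence no $\term \in \Terminals[(\hist\act)]$ lies in $\Terminals^\time$, and $\partutil{\time+1}{\term} = \partutil{\time}{\term}$ for all such $\term$. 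For the second, I would note that every information set below $(\hist\act)$ is contained in a public state descending from $\Pubstate((\hist\act))$ (public states are closed under the information set relation, and respect perfect recall), so none of these information sets have their regrets updated on iteration $\time$; consequently regret matching returns $\strategy^{\time+1} = \strategy^\time$ throughout the subtree and $\reach{\strategy^{\time+1}}((\hist\act), \term) = \reach{\strategy^{\time}}((\hist\act), \term)$ for all $\term \sqsupseteq (\hist\act)$. With both factors unchanged, the inductive hypothesis carries the invariant from $\time$ to $\time+1$, completing the induction.
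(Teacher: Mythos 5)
Your proof is correct and follows essentially the same route as the paper's: induction on $\time$ with a trivial base case, a case split on whether $\Pubstate((\hist\act))$ lies on the sampled public trajectory, an appeal to Lemma~\ref{lem:predupdate} (whose premise is supplied by the inductive hypothesis) in the sampled case, and an argument that both $\partutil{}{\cdot}$ and the reach probabilities are unchanged below an unsampled public state in the other case. The only nit is that the predictive baseline update is defined as $\baseline^{\time+1}(\hist,\act) = \baseutil{(\hist\act)}{\strategy^{\time+1}}{\Terminals^\time}$ rather than $\baseutil{\hist,\act}{\strategy^{\time+1}}{\Terminals^\time}$, but these coincide on the sampled trajectory (as shown inside the proof of Lemma~\ref{lem:predupdate}), so nothing breaks.
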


\begin{proof}
We proceed by induction on time. For the base case, $\time = 1$, by definition $\baseline^\time(\hist, \act) = 0$, and $\partutil{\time}{\term} = 0$ for all $\term \in \Terminals$.

For the inductive step, we assume that the lemma holds at time $\time$, and we show that it then follows for time $\time + 1$. We break this into two cases, based on whether $(\hist\act)$ is sampled on time $\time$ or not.

If $\Pubstate((\hist\act)) \not\sqsubseteq \Terminals^{\time}$, then $\baseline^{\time+1}(\hist, \act) = \baseline^{\time}(\hist, \act)$ by definition of the predictive baseline. Also by definition $\partutil{\time+1}{\term} = \partutil{\time}{\term}$ for any $\term \in \Terminals[(\hist\act)]$, because otherwise $\term \in \Terminals^\time$. Next, we show that $\reach{\strategy^{\time+1}}((\hist\act), \term) = \reach{\strategy^{\time}}((\hist\act), \term)$ for any $\term \in \Terminals[(\hist\act)]$. Assume for the sake of contradiction that this doesn't hold. Then there must be some $(\hist'\act') \sqsupseteq (\hist\act)$ such that $\strategy^{\time+1}(\hist',\act') \neq \strategy^{\time}(\hist',\act')$. By the definition of the MCCFR algorithm, this only occurs if $\Regret{\time+1}(\Iset(\hist'),\act') \neq \Regret{\time}(\Iset(\hist'),\act')$, which in turn only occurs if $\Pubstate(\hist') \sqsubseteq \Terminals^\time$. But then $\Pubstate((\hist\act)) \sqsubseteq \Pubstate(\hist') \sqsubseteq \Terminals^\time$, which contradicts the premise. Putting this all together, we have
\begin{align*}
\baseline^{\time+1}(\hist, \act) &= \baseline^{\time}(\hist, \act)\\
&= \sum_{\term \in \Terminals[(\hist\act)]} \reach{\strategy^{\time}}((\hist\act), \term) \partutil{\time}{\term} &&\text{by inductive hypothesis}\\
&= \sum_{\term \in \Terminals[(\hist\act)]} \reach{\strategy^{\time+1}}((\hist\act), \term) \partutil{\time+1}{\term}
\end{align*}
This completes the inductive step in the case that $(\hist\act)$ was not sampled.

If $\Pubstate((\hist\act)) \sqsubseteq \Terminals^{\time}$, then the following holds
\begin{align*}
\baseline^{\time+1}(\hist, \act) &= \baseutil{(\hist\act)}{\strategy^{\time+1}}{\Terminals^\time} &&\text{by predictive baseline definition}\\
&= \sum_{\term \in \Terminals[(\hist\act)]} \reach{\strategy^{\time+1}}((\hist\act), \term) \partutil{\time+1}{\term} &&\text{by Lemma~\ref{lem:predupdate}}
\end{align*}
This completes the inductive step in the case that $(\hist\act)$ was sampled. Thus the inductive step always holds, and the lemma follows by induction.
\end{proof}

To prove Theorem~\ref{thm:zerovar}, we note that if $\Terminals[\hist] \subseteq \bigcup_{\tau < \time} \Terminals^\tau$, then by definition $\partutil{\time}{\term} = \utility(\term)$ for any $\term \in \Terminals[\hist]$. Thus we have
\begin{align*}
\baseline^\time(\hist,\act) &= \sum_{\term \in \Terminals[(\hist\act)]} \reach{\strategy^{\time}}((\hist\act), \term) \partutil{\time}{\term} &&\text{by Lemma~\ref{lem:predinvariant}}\\
&= \sum_{\term \in \Terminals[(\hist\act)]} \reach{\strategy^{\time}}((\hist\act), \term) \utility(\term)\\
&= \exputil{(\hist\act)}{\strategy^\time} &&\text{by definition}
\end{align*}\qed

\section{Measuing counterfactual value variance}
\label{sec:empvar}

In Figure~2b of the main paper, we reported empirical variance of counterfactual values for POS MCCFR with baselines. To measure these, we run MCCFR for some number of iterations, then freeze the strategy. We then walk every information set-action pair in the game tree, and for each such pair we run a large number of sampled trajectories originating at the pair. These trajectories are walked as if we were running MCCFR with POS, but we do not update the strategy. Instead, we only calculate the sampled counterfactual value $\sum_{\hist \in \Iset} \reach{\strategy^\time}_{-\Toact(\hist)}(\hist)\baseutil{(\hist\act)}{\strategy^\time}{\term^\time}$ at the initial $\Iset, \act$ pair. From these samples, we compute an estimate of the true variance of the counterfactual value. Finally, we average these variance estimates across all information set-action pairs in the game.

\section{Baselines in Monte Carlo continual resolving}

Typically, online play in games with perfect information is accomplished by performing an independent computation at each new decision point to decide the agent's next action. Traditionally, this approach was intractable in games with imperfect information because there was no way to guarantee that these individual decisions would fit together into a cohesive equilibrium strategy. Instead, the traditional way of playing online was to use a precomputed strategy as a lookup table. Recently, however, techniques have been developed for safe and efficient online computation of strategies in imperfect information games \citep{Moravcik17, Brown17, Brown18b}. In this section, we discuss how Monte Carlo baselines fit into this work.

A particular example of this new paradigm, \emph{continual resolving}, was used in the DeepStack agent which defeated poker professionals \citep{Moravcik17}. Continual resolving contains two key parts. First, a safe resolving method is used to compute each decision independently in an online fashion, while still guaranteeing that the agent plays an approximate equilibrium strategy. Second, value approximation is used to restrict the depth of future actions that are considered when solving each decision. With these ingredients, CFR+ is used to solve a relatively small subgame each time the agent must make an action selection. \citet{Sustr19} replaced the CFR+ solver with MCCFR, creating \emph{Monte Carlo continual resolving (MCCR)}. It is straightforward to use our baseline framework within MCCR.

We conducted an experiment examining MCCR with baselines. We performed our experiment in Leduc. We measure the exploitability of strategy profiles that are constructed by independently solving each decision point as in online self-play. For each decision, we solve a subgame of depth three (i.e. looking a maximum of three actions into the future). After three actions, we approximately a history by running 100 iterations of CFR+ on the subtree rooted at the history and using the resulting strategy\footnote{This strategy, which contain errors because of the low number of iterations, approximates using a neural net for evaluation.} to generate values. At each decision point, we run resolving until we have performed a maximum number of evaluations, either at terminal histories or depth-limited evaluations. We use this as an implementation-independent way of comparing algorithms, as evaluations take the vast majority of computation time in continual resolving.

We compare MCCR with and without baselines. We use CFR+ updates, which we've found to decrease variance when combined with any baseline, and public outcome sampling with transitions sampled from the uniform distribution. Because of the inexact nature of the evaluation function, Theorem~\ref{thm:zerovar} does not hold in this setting, and we found learned history baselines to slightly outperform predictive baselines in preliminary experiments. We also compare to (deterministic) continual resolving, with both CFR and CFR+ update rules.

\begin{figure*}[htbp]
\centering
\includegraphics[width=\linewidth]{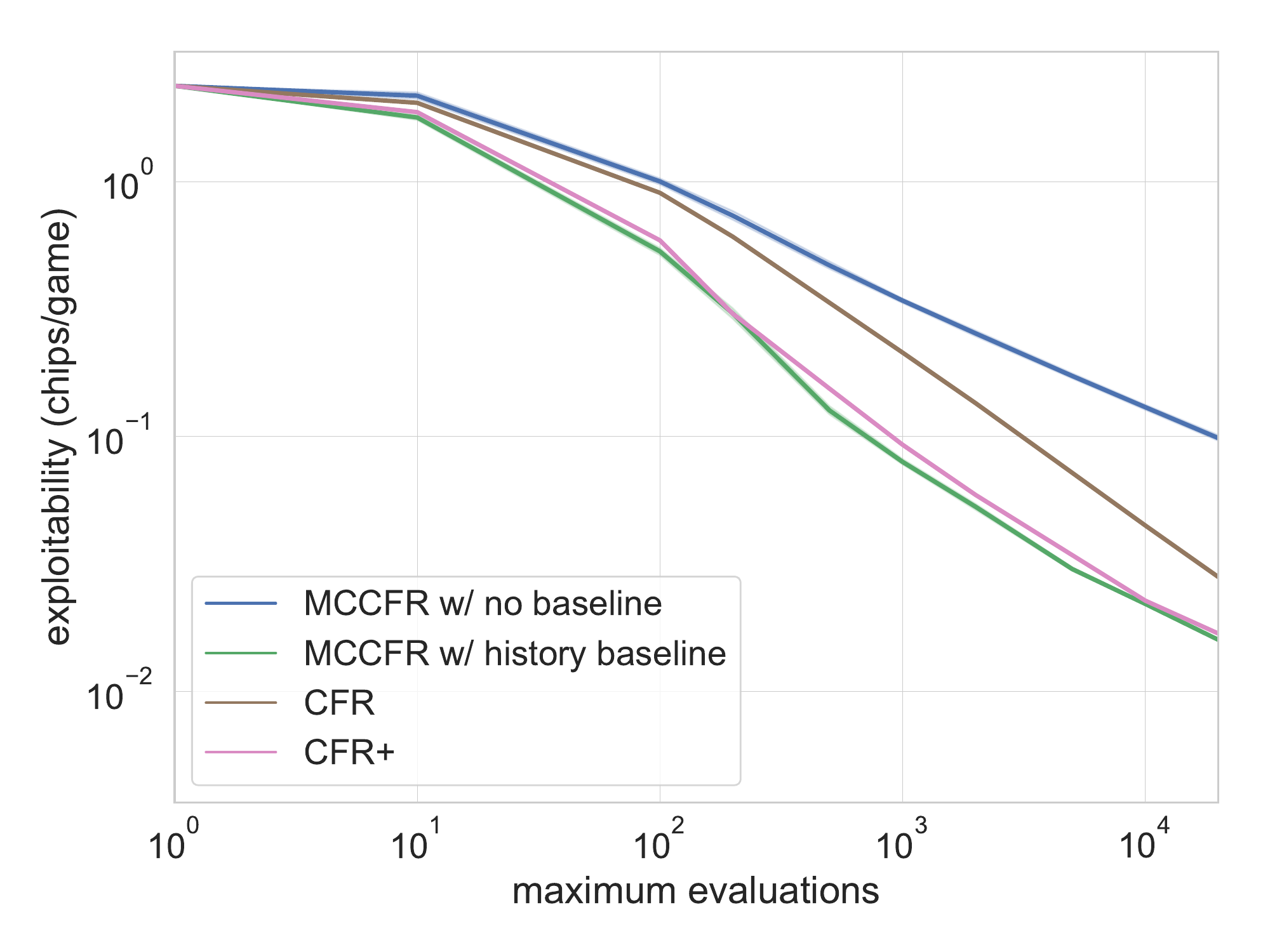}
\caption{Exploitability of continual resolving strategies based on the maximum number of evaluations allowed per resolve. For MCCFR, results are averaged over 20 runs with 95\% confidence intervals shown as (indiscernible) error bands.}
\label{fig:resolving}
\end{figure*}

Results are shown in Figure~\ref{fig:resolving}. We see that the inclusion of a baseline significantly decreases the exploitability of MCCR strategies. Without a baseline, MCCR is not competitive with the deterministic continual resolving. With a baseline, it is able to clearly outperform continual resolving with CFR updates, and slightly outperform continual resolving with CFR+ updates. This is especially notable because there is still plenty of room to improve the technique, such as by tuning the baseline and especially by refining the sampling strategy.
}

\end{document}